\pgfplotsset{width=10cm,compat=1.9}
\newtheorem{theorem}{Theorem}
\newtheorem{lemma}{Lemma}
\newtheorem{proposition}{Proposition}
\newtheorem{remark}{Remark}
\newcommand{\argmax}{\mathrm{argmax}}
\newcommand{\apass}{\vec{\mathbf{0}}}
\begin{document}




\title{Optimal Batched Scheduling of Stochastic Processing Networks Using Atomic Action Decomposition}


\author{%
    Jim Dai\\
    \footnotesize{Operations Research and Information Engineering, Cornell University, jd694@cornell.edu}\and
    Manxi Wu\\
    \footnotesize{Department of Civil and Environmental Engineering, University of California, Berkeley, manxiwu@berkeley.edu}\and
    Zhanhao Zhang\\
    \footnotesize{Operations Research and Information Engineering, Cornell University, zz564@cornell.edu}
}

\date{}
\maketitle

\begin{abstract}
    Stochastic processing networks (SPNs) have broad applications in healthcare, transportation, and communication networks. The control of SPN is to dynamically assign servers in batches under uncertainty to optimize long-run performance. This problem is challenging as the policy dimension grows exponentially with the number of servers, making standard reinforcement learning and policy optimization methods intractable at scale.
We propose an atomic action decomposition framework that addresses this scalability challenge by breaking joint assignments into sequential single-server assignments. This yields policies with constant dimension, independent of the number of servers. We study two classes of atomic policies, the step-dependent and step-independent atomic policies, and prove that both achieve the same optimal long-run average reward as the original joint policies. These results establish that computing the optimal SPN control can be made scalable without loss of optimality using the atomic framework. Our results offer theoretical justification for the strong empirical success of the atomic framework in large-scale applications reported in previous articles. 

\end{abstract}



%


\section{Introduction} \label{sec:intro}
The stochastic processing network (SPN) model (\cite{dai2020processing}) has been widely used to model complex service systems in domains such as healthcare, communication, and transportation. These networks consist of multiple servers, each capable of executing certain service types, and the system manager’s goal is to dynamically assign servers over time to optimize long-term performance under uncertainty. While early work focuses on developing control policies that are optimal or near-optimal under certain limiting regimes (e.g. heavy traffic), their performance may degrade substantially outside the limiting regimes.  Reinforcement learning (RL) offers a promising alternative, but applying standard RL methods to SPNs is challenging due to the high dimensionality of the state and action spaces, especially in large networks.

In this work, we develop new reinforcement learning algorithms for the optimal control of stochastic processing networks (SPNs) under batched decision-making, where control actions are chosen at discrete time intervals over groups of jobs. Batched control naturally arises in applications such as shift scheduling, inventory restocking, and hospital workforce planning, where decisions must account for procurement delays and logistical constraints. Batched control also enables the system to aggregate more information before committing to actions, which can improve assignment quality and reduce inefficiencies \citep{akbarpour2020thickness}. However, optimizing over batched actions leads to a combinatorial explosion in the number of feasible server assignments as the system scales, limiting the scalability of standard reinforcement learning methods and motivating the need for new algorithmic approaches.

We consider a stochastic processing network (SPN) with multiple servers, item classes, and service types. Items of each class arrive exogenously and enter class-specific buffers, where they wait to be processed or are actively in service. Each item class is eligible for specific service types, and each service type can be performed by a compatible subset of servers. Unlike prior models that assume fixed compatibility between servers and services \citet{dai2020processing}, we allow compatibility to be dynamic, evolving based on each server’s service history. This generalization captures practical settings such as ride-hailing markets, where a driver’s ability to serve passengers in a certain region depends on the driver’s current location. We formulate the SPN control problem as a Markov Decision Process (MDP), where the system state includes current item and service counts, and each action specifies a complete assignment of servers to services. The reward at each time step equals the total service reward minus the holding cost of all queued items, and the objective is to find a policy that maximizes the long-run average reward.

A key challenge in this setting is the exponential growth of the action space with the number of servers, which makes direct policy optimization intractable for large systems. To address this, we propose the {\em atomic action decomposition} framework, which breaks the joint assignment into sequential assignments of individual servers. We define each single-server assignment as an {\em atomic action}, and its contribution to the objective as the {\em atomic reward}. At each time step, the system selects multiple atomic actions one by one, updating the state after each atomic assignment to reflect the new service count. Once all servers are assigned, the holding cost is applied, and the system transitions to the next state to account for item arrivals and service completions. By decomposing the decision process into a sequence of {\em atomic steps}, this framework reduces the policy dimension to depend only on the number of feasible assignments for a single server, which does not scale with the server number.

We consider two types of atomic policies: step-dependent and step-independent. A step-dependent atomic policy takes as input both the system state and the atomic step index (i.e., the number of atomic actions already generated within the current time step), allowing it to use a distinct decision rule at each step. In contrast, a step-independent atomic policy takes only the system state as input and applies the same decision rule at every atomic step. Both types maintain a constant policy dimension that does not scale with the number of servers. The step-independent policy avoids state augmentation and therefore yields a substantially smaller overall policy space, particularly beneficial in large-scale systems where the number of atomic steps is large.

We prove that  both types of atomic policies can achieve the same optimal long-run average reward as the original policies, as stated in Theorems \ref{thm:atomic-optimal-aug} and \ref{thm:atomic-policy-equivalence}, respectively. \emph{These results are significant because they show that optimal control in SPNs as an essential class of models for service and operations systems can be computed using the highly scalable atomic decision framework without loss of optimality.} The proofs of Theorems \ref{thm:atomic-optimal-aug} and \ref{thm:atomic-policy-equivalence} leverage the {\em reward decomposability} property of the SPN control problems, where the reward (or cost) incurred by the joint assignment of all servers can be expressed as the sum of rewards (or costs) for the assignment of individual servers. In particular, Theorem \ref{thm:atomic-optimal-aug} is established by showing that the solution to the optimality equation of the atomic MDP (i.e., the MDP induced by step-dependent atomic policies) also satisfies the optimality equation of the original MDP. The proof of Theorem~\ref{thm:atomic-policy-equivalence} constructs a reduced passing-last atomic MDP and shows that it is equivalent to the original atomic MDP in terms of long-run average reward for any deterministic step-independent policy. This construction enables us to show that the optimality equation of the passing-last atomic MDP is equivalent to that of the original MDP, which implies that the step-independent atomic policy achieves the optimal reward of the original policies.

Our theoretical results help explain the empirical success of Atomic-PPO, a deep reinforcement learning algorithm that integrates atomic action decomposition into proximal policy optimization \citep{schulman2017proximal}. In Section~\ref{sec:applications}, we review results from prior work \citep{sun2024inpatient, Huo_Switch_Scheduling_via, dai2025atomicEV, feng2021scalable} showing that Atomic-PPO performs well in domains such as hospital inpatient overflow assignment, input-queued switch scheduling, and ride-hailing dispatching, all of which can be modeled as SPNs with high-dimensional control spaces. The reward decomposition property is satisfied in all  of these three prior work. We leave it to future work to explore the robustness of the atomoic-PPO algorithm for MDP problems when the reward decompostion is only approximately satisfied.

\subsection{Literature review}
The optimal control of stochastic processing networks has been studied extensively for more than 30 years. Early work focuses on developing control policies that are optimal or near-optimal under certain limiting regimes (e.g. heavy traffic) \citep{pedarsani2017robust, pedarsani2014scheduling, pedarsani2014robust, liu2025zero, liu2022large, liu2019spatial, liu2013scheduling, bodas2012low, jaramillo2011optimal, dong2024shortest, dong2021srpt, berg2024asymptotically, choudhury2021job, yang2023learning, weng2020achieving, weng2022algorithm, lu2016optimal, li2016queue, bodas2009scheduling, athanasopoulou2012back, ni2011q, ying2011cluster, eryilmaz2005stable, liu2011throughput, bertsimas2014robust, chen1993dynamic, harrison1996bigstep, meyn1997stability, maglaras2000discrete, ata2005heavy, bauerle2000asymptotic, lu1994efficient, kumar1996fluctuation, dai2005maximum, tassiulas1995adaptive, tassiulas1992jointly, stolyar2004maxweight, miyazawa2017unified}. While these approaches provide important structural insights, their performance may degrade substantially outside the limiting regimes. In recent years, various reinforcement learning (RL) approaches have been adopted to SPN problems, including value-based algorithms \citep{dong2025multiclass, dai2019inpatient, moallemi2008approximate, brown2001switch, liu2022rl, chen2018deep, oshri2017deep, wang2018deep, oroojlooyjadid2022deep, vanvuchelen2025use}, policy-based algorithms \citep{peshkin2002reinforcement, yu2018drom}, and actor-critic methods \citep{dai2022queueing, vanvuchelen2020use, van2023using, kaynov2024deep, liu2022multi, ding2022multi}. The direct application of standard RL techniques to large-scale networks faces severe scalability challenges due to the high-dimensional state and action spaces. Our atomic action decomposition framework contributes to this line of literature by reducing the high-dimension SPN control problem to a sequence of low-dimensional decisions over individual server assignments, enabling scalable policy learning without sacrificing optimality.

Our idea of atomic action decomposition relates to cooperative multi-agent reinforcement learning (MARL), where multiple agents act based on a shared global state to maximize a common reward, with both the reward and state transitions determined by their joint actions \citep{kuba2021trust, bertsekas2019multiagent, gu2021multi, kuba2022heterogeneous, wang2023order, kuba2021settling, wen2022multi, fu2022revisiting, ye2022towards, wen2024reinforcing}. As in our setting, the policy space in cooperative MARL grows exponentially with the number of agents, posing major challenges for policy training. A common strategy to mitigate this complexity is to factor the joint policy into a sequence of lower-dimensional policies, where each agent’s action is generated conditioned on the actions of preceding agents \citep{wen2022multi, fu2022revisiting, ye2022towards, wen2024reinforcing}. This class of policies, often referred to as {\em auto-regressive policies}, retains optimality when each agent’s action is conditioned on the pre-decision state and all preceding actions \citep{ye2022towards}.

Our atomic policy framework significantly improves scalability over the auto-regressive approaches by exploiting the structural properties of SPN control problems. Unlike auto-regressive policies, which require each agent (or server) to condition its action on the entire history of previously selected actions, our atomic policies avoid conditioning on action histories altogether. Step-dependent atomic policies use only the current state and step index, while step-independent policies rely solely on the current state. As a result, the input space of our atomic policies remains fixed across steps and independent of the number of servers, whereas the input space of auto-regressive policies grows exponentially with the number of agents.

%

Our reduction in atomic policy input dimension is achieved without sacrificing optimality, thanks to two key features unique to SPNs. First, state transitions across atomic steps can be explicitly defined through deterministic updates to service counts, eliminating the need to track prior actions. Second, the reward decomposability property ensures that the total reward can be written as a sum of per-server rewards, allowing the system to evaluate each atomic decision independently.
In contrast, general MARL settings often lack these properties: state transitions are typically entangled across agents, and policy updates require conditioning on full action histories.
This necessitates auto-regressive policies to condition on action histories, leading to high and variable input dimensions. As a result, auto-regressive policies in MARL often rely on complex architectures such as multi-head attention layers \citep{vaswani2023attentionneed}, whereas our atomic PPO framework only requires shallow feed-forward neural networks, making it far more scalable in large systems.


Beyond the connections with MARL, our work also relates to a distinct line of research that enhances policy training efficiency by exploiting structural properties of specific classes of SPNs \citep{alvo2023neural, madeka2022deep, che2024differentiable,jali2024efficient}. Alvo et al. and Madeka et al. leverage the differentiability of action and transition functions in inventory control problems, Che et al. develop a policy gradient estimation method for discrete-event simulations of multi-class queueing networks, while Jali et al. adopt a parameterized threshold policy to tackle with the routing problems in parallel server systems. In contrast, our atomic action decomposition leverages the state-transition and reward decomposability properties of SPNs.


The remainder of this paper is organized as follows. Section \ref{sec:model} introduces the SPN control problem and formulates it as an MDP. Section \ref{sec:atomic-decomp} presents our atomic action decomposition framework, defines two forms of atomic policies, and proves that both can achieve the optimal long-run average reward. In Section \ref{sec:atomic-ppo}, we introduce our Atomic-PPO algorithm. Section \ref{sec:applications} illustrates how the SPN model can be applied to hospital inpatient overflow assignment, switch scheduling, and ride-hailing, and how their control policy optimization benefits from our atomic action decomposition framework. Finally, Section \ref{sec:conclusion} concludes the paper and discusses future research directions. We include supplementary statements and proofs in the appendix.

\section{Control of stochastic processing networks} \label{sec:model}
We consider a non-preemptive stochastic processing network with $K$ servers, $J$ service types, and $I$ item classes. We use $\mathcal{J}$ to denote the set of all service types and $\mathcal{I}$ to denote the set of all item classes.
Items of each class $i \in \mathcal{I}$ arrive exogenously into the system. The newly arrived item will be queued in the unique buffer associated with its class until it is processed by a server. Each class of items can be processed by services of certain types. In particular, we define the material requirement matrix $B \in \{0, 1\}^{I \times J}$, where $B_{ij} = 1$ indicates that an item of class $i \in \mathcal{I}$ can be processed by a service of type $j \in \mathcal{J}$. We require that 
\begin{subequations} \label{eq:B-matrix-constraints}
\begin{align}
    \sum_{i \in \mathcal{I}} B_{ij} \leq& 1, &\forall j \in \mathcal{J}, \label{eq:B-matrix-constraints-1}\\
    \sum_{j \in \mathcal{J}} B_{ij} \geq& 1, &\forall i \in \mathcal{I}, \label{eq:B-matrix-constraints-2}
\end{align}
\end{subequations}
where \eqref{eq:B-matrix-constraints-1} ensures that each type of service can process at most one class of item, and \eqref{eq:B-matrix-constraints-2} ensures that each item class can be processed by at least one type of service.

After a service completes, the associated item is removed from its current buffer and then either routed to the buffer of a different class or exits the system. We use $F \in \{0, 1\}^{I \times J}$ to denote the item routing matrix, where $F_{ij} = 1$ indicates that an item will be added to buffer $i \in \mathcal{I}$ after the completion of a type $j \in \mathcal{J}$ service. If, for a given service type $j \in \mathcal{J}$, $F_{ij} = 0$ for all item classes $i \in \mathcal{I}$, then the item processed by the type $j$ service will exit the system upon service completion.

A server is required to execute each service. We say that a service is {\em open} if it has been initiated but not completed. After a service completes, the associated server becomes {\em idle} and is available to receive a new service of a certain type. We define the compatibility matrix $C \in \{0, 1\}^{J \times J}$, where $C_{j'j} = 1$ indicates that a server completing a type $j'$ service is eligible to begin a type $j$ service. We require that every service type can be executed following some prior service:
\begin{align*}
    \sum_{j' \in \mathcal{J}} C_{j'j} \geq& 1, &\forall j \in \mathcal{J}.
\end{align*}

The stochastic processing network evolves on discrete time steps indexed by $t = 0, 1, \dots$, where each time step is a fixed interval during which the system is updated. At each time $t$, the system keeps track of the current item counts $z^t := (z^t_{i})_{i \in \mathcal{I}}$ and the current service counts $n^t := (n^t_{j,\tau})_{j \in \mathcal{J}, \tau \in \mathbb{N}_+ \cup \{\infty\}}$, where $z^t_{i}$ represents the number of items in buffer $i \in \mathcal{I}$ (including both items waiting to be served and items currently being served) and $n^t_{j, \tau}$ represents the number of type $j \in \mathcal{J}$ services with an {\em age} of $\tau \in \mathbb{N}_+ \cup \{\infty\}$. The {\em age} of a service represents the number of time steps it has remained open. 

Each open service of type $j \in \mathcal{J}$ with an age of $\tau \in \mathbb{N}_+$ has a probability of $\mu_{j, \tau} \in [0, 1]$ to complete in the current time step. We use $n^t_{j',\infty}$ to denote the number of idle servers at time $t$ that were previously executing a type $j' \in \mathcal{J}$ service.
We note that the total number of servers equals the sum of busy and idle servers, that is,
\begin{align*}
    \sum_{j \in \mathcal{J}} \sum_{\tau \in \mathbb{N}_+ \cup \{\infty\}} n^t_{j,\tau} = K.
\end{align*}


At each time $t$, based on the item count vector $z^t := (z^t_{i})_{i \in \mathcal{I}}$ and the service count vector $n^t := (n^t_{j,\tau})_{j \in \mathcal{J}, \tau \in \mathbb{N}_+ \cup \{\infty\}}$, the {\em system manager} selects a batch of services to initiate, which we refer to as a {\em schedule}. We denote the schedule at time $t$ as $a^t := (a^t_{j'j})_{j,j' \in \mathcal{J}}$, where $a^t_{j'j}$ represents the number of servers that previously executed service type $j'$ and are assigned to execute service type $j$. A feasible schedule $a^t$ needs to satisfy the following constraints:
\begin{enumerate}
    \item The server assignment is compatible \eqref{eq:setup-feasibility-compatible} and the total number of assigned servers does not exceed the available server number \eqref{eq:setup-feasibility-server}: 
    \begin{align} 
        &a^t_{j'j} > 0 \text{ only if } C_{j'j} = 1, &\forall j,j' \in \mathcal{J}, \label{eq:setup-feasibility-compatible} \\
        &\sum_{j \in \mathcal{J}} a^t_{j'j} \leq n^t_{j', \infty}, &\forall j' \in \mathcal{J}. \label{eq:setup-feasibility-server}
    \end{align}
    \item For each item class $i \in \mathcal{I}$, the total number of class $i$ items processed by all services—including the newly initiated and open ones—cannot exceed the number of items in buffer $i$:
    \begin{align} \label{eq:setup-feasibility-item}
        &\sum_{j,j' \in \mathcal{J}} a^t_{j'j} B_{ij} + \sum_{j \in \mathcal{J}} \sum_{\tau \in \mathbb{N}_+} n^t_{j,\tau} B_{ij} \leq z^t_{i}, &\forall i \in \mathcal{I}.
    \end{align}
    In \eqref{eq:setup-feasibility-item}, the first term on the left-hand side represents the total number of items processed by newly initiated services, and the second term represents the total number of items processed by open services initiated in previous time steps.
\end{enumerate}

We denote the post-decision time as $t_+$ and update the item counts and service counts to $z^{t_+}$ and $n^{t_+}$ to reflect the initiation of new services:
\begin{align}
    n^{t_+}_{j,0} =& \sum_{j' \in \mathcal{J}} a^t_{j'j}, &\forall j \in \mathcal{J}, \label{eq:setup-service-transition}\\
    n^{t_+}_{j,\infty} =& n^t_{j, \infty} - \sum_{j' \in \mathcal{J}} a^t_{jj'}, &\forall j \in \mathcal{J}. \label{eq:setup-service-transition-inf}
\end{align}
The remaining entries of $(z^{t_+}, n^{t_+})$ equal to that of $(z^t, n^t)$.
Equation \eqref{eq:setup-service-transition} sets the counts of age $0$ services to be equal to the newly initiated ones. Equation \eqref{eq:setup-service-transition-inf} decrements the counts of idle servers to reflect their assignment to new services. 

We denote $\Lambda_i$ as the distribution of exogenous arrivals for class $i$ items in each time step and $x_i^t$ as the number of class $i$ arrivals realized during time $[t, t+1)$. Let $y_{j,\tau}^t$ denote the number of completed type $j$ services with age $\tau$ during $[t, t+1)$. Since each of the $n^{t_+}_{j,\tau}$ services completes independently with probability $\mu_{j,\tau}$, $y_{j,\tau}^t$ follows the distribution $\text{Binomial}(n^{t_+}_{j,\tau}, \mu_{j,\tau})$. Then, the system updates its item counts and service counts to reflect the new item arrivals and service completions:
\begin{align}
    z^{t+1}_{i} =& z^{t_+}_{i} + x^{t}_i - \sum_{j \in \mathcal{J}} \sum_{\tau \in \mathbb{N}_+} y^{t}_{j,\tau} B_{ij} \notag\\
    &+ \sum_{j \in \mathcal{J}} \sum_{\tau \in \mathbb{N}_+} y^{t}_{j,\tau} F_{ij}, \notag\\
    &\quad\forall i \in \mathcal{I}, \label{eq:setup-item-transition-dynamic} \\
    n^{t+1}_{j,\tau} =& n^{t_+}_{j,\tau-1} - y^{t}_{j,\tau-1},\notag\\
    &\quad \forall j \in \mathcal{J},\ \forall \tau > 0, \label{eq:setup-service-transition-dynamic} \\
    n^{t+1}_{j,\infty} =& n^{t_+}_{j,\infty} + \sum_{\tau \in \mathbb{N}_+} y^{t}_{j,\tau},\notag\\
    &\quad \forall j \in \mathcal{J}. \label{eq:setup-service-transition-inf-dynamic}
\end{align}
Equation \eqref{eq:setup-item-transition-dynamic} updates the item counts in each buffer based on three components: (i) the number of class $i$ items newly arrived into the system $x^{t+1}_i$; (ii) the number of class $i$ items exiting the buffer after service completion $\sum_{j \in \mathcal{J}} \sum_{\tau \in \mathbb{N}_+} y^{t}_{j,\tau-1} B_{ij}$; and (iii) and the number of items routed to buffer $i$ after service completion $\sum_{j \in \mathcal{J}} \sum_{\tau \in \mathbb{N}_+} y^{t}_{j,\tau-1} F_{ij}$. Equations \eqref{eq:setup-service-transition-dynamic}-\eqref{eq:setup-service-transition-inf-dynamic} updates the service counts by accounting for the service completions and age increase of open services during time $[t,t+1)$.

\subsection{Markov decision process}
We model the system manager's control of the stochastic processing network as a discrete-time Markov Decision Process (MDP) with an infinite time horizon.
The state space is denoted by $\mathcal{S}$, where each state $s^t := (z^t, n^t) \in \mathcal{S}$ records the item counts and service counts for each time $t$. For ease of exposition, we assume that the state space is finite. This finiteness is ensured by imposing upper bounds on both the total number of items in the system and the maximum service age. Specifically, we assume that newly arrived items will be rejected if the total item count exceeds a large but finite threshold. Moreover, for each service type $j \in \mathcal{J}$, we assume the service completion probability $\mu_{j,\tau}$ equals $1$ for all ages $\tau$ greater than or equal to some large but finite value.
The action space is denoted by $\mathcal{A}$, where each action corresponds to a schedule in the SPN model.
For each state $s := (z, n) \in \mathcal{S}$, we denote the set of feasible actions $\mathcal{A}_s \subseteq \mathcal{A}$ as the set of all actions $a$ that satisfy constraints \eqref{eq:setup-feasibility-compatible}–\eqref{eq:setup-feasibility-item}.

We denote the transition matrix $P \in [0,1]^{|\mathcal{S}| \times |\mathcal{A}| \times |\mathcal{S}|}$, where $P(s^{t+1} \mid s^t, a^t)$ is the probability of transitioning from state $s^t$ at time $t$ to state $s^{t+1}$ at time $t+1$, given action $a^t \in \mathcal{A}_{s^t}$. We use $P^{\text{sys}} \in [0,1]^{|\mathcal{S}| \times |\mathcal{S}|}$ to denote the transition matrix corresponding to the stochastic update. Specifically, $P^{\text{sys}}(s^{t+1} \mid s^{t_+})$ represents the probability of transitioning from the {\em post-decision state} $s^{t_+} := (n^{t_+}, z^{t_+})$ at time $t$ to the state $s^{t+1}$ at time $t+1$ due to exogenous dynamics as in \eqref{eq:setup-item-transition-dynamic}–\eqref{eq:setup-service-transition-inf-dynamic}.

We define the reward function $r: \mathcal{S} \times \mathcal{A} \rightarrow \mathbb{R}$, where $r(s^t, a^t)$ represents the reward generated by choosing the action $a^t \in \mathcal{A}_{s^t}$ given the state $s^t \in \mathcal{S}$. The reward has two components: a one-time reward (or cost) $r_j \in \mathbb{R}$ incurred when initiating a service of type $j \in \mathcal{J}$, and a per time-step holding cost $r_H(s^{t_+})$ for items in the buffer, which depends on the post-decision state $s^{t_+}$. Here, $r_H: \mathbb{N}_+^{\vert \mathcal{S} \vert} \rightarrow \mathbb{R}_{\leq 0}$ denotes the holding cost function. The reward $r(s^t, a^t)$ is given by:
\begin{align} \label{eq:setup-schedule-reward}
    r(s^t, a^t) := \sum_{j \in \mathcal{J}} \sum_{j' \in \mathcal{J}} a^t_{j'j} r_j + r_{H}( s^{t_+}_{} ).
\end{align}

We denote $\pi: \mathcal{S} \rightarrow \Delta(\mathcal{A})$ as the policy of the system manager. We assume that the MDP is an aperiodic unichain, meaning that under any policy, the induced Markov chain has a single recurrent class (possibly with transient states) and is aperiodic (see Page 348 of \cite{PutermanMDP}). Under this assumption, the Markov chain admits a unique stationary distribution for each policy $\pi$, allowing us to define its long-run average reward as: \[
    R(\pi) = \lim_{T \rightarrow \infty} \frac{1}{T} \mathbb{E}_{\pi}\left[\sum_{t = 1}^{T} r(s^t, a^t) \right],
\] The goal of the system manager is to find a policy that maximizes the long-run average reward over an infinite time horizon. We denote the optimal value as $R^* := \sup_{\pi} R(\pi)$ and the corresponding optimal policy as $\pi^*$.


\section{Atomic action decomposition} \label{sec:atomic-decomp}
One core challenge for computing the optimal policy $\pi^*$ is the exponential growth of the action space $\mathcal{A}$ as the number of servers $K$ increases. For each idle server, a decision must be made based on both the type $j' \in \mathcal{J}$ of the last completed service and the type $j \in \mathcal{J}$ of the next service to initiate.  There are up to \( J \times J = J^2 \) such (previous type, next type) compatible combinations for each server. Since a schedule can assign new services to any subset of up to $K$ servers, the total number of actions grows on the order of \( O((J^2)^K) = O(J^{2K}) \). Optimizing a policy over this large action space is computationally prohibitive.

To reduce this complexity, we propose an atomic decomposition of actions. We define an \emph{atomic action} $\hat{a}$ as a binary matrix $\hat{a} \in \{0, 1\}^{J \times J}$ that contains at most one non-zero entry, i.e., $\sum_{j,j' \in \mathcal{J}} \hat{a}_{j'j} \leq 1$. 
If \( \hat{a}_{j'j} = 1 \), then the atomic action $\hat{a}$ indicates assigning a server that previously completed a type \( j' \) service to initiate a type \( j \) service. If all entries are zero, then the atomic action $\hat{a}$ is referred as the \emph{passing} atomic action, denoted by \( \apass \), indicating that no service is initiated. The set of all atomic actions, denoted by $\hat{\mathcal{A}}$, has size  \( O(J^2) \), which does not scale with the number of servers. Given a system state $s$,  an atomic action is feasible if it satisfies constraints \eqref{eq:setup-feasibility-compatible}-\eqref{eq:setup-feasibility-item}. We denote the set of feasible atomic actions given state $s$ as \( \hat{\mathcal{A}}_s \subseteq \hat{\mathcal{A}} \).

At each time step, the system manager sequentially selects \( K \) atomic actions, referred to as \( K \) \emph{atomic steps}. Consider time \( t \), the state before assigning the first atomic action, denoted as $s_1^t$, is the same as the system state of time $t$, i.e. $s_1^t = s^t$. Based on this state, the system manager selects a feasible atomic action \( \hat{a}^t_1 \in \hat{\mathcal{A}}_{s^t_1} \), and the system transitions to a new state \( s^t_2 := (z^t_2, n^t_2) \) that reflects the service initiated by the selected atomic action. This procedure is repeated for each atomic step \( k = 1, \dots, K \). Let \( s^t_k := (z^t_k, n^t_k) \in \mathcal{S} \) be the system state before the \( k \)-th atomic step, and \( \hat{a}^t_k \in \hat{\mathcal{A}}_{s^t_k} \) be the k-th atomic action. The state transition from \( s^t_k \) to \( s^t_{k+1} \) is given by:
\begin{align}
    (z^t_{k+1})_i =& (z^t_k)_i, &\forall i \in \mathcal{I}, \label{eq:atomic-item-transition}\\
    (n^t_{k+1})_{j,0} =& (n^t_k)_{j,0} + \sum_{j' \in \mathcal{J}} (\hat{a}^t_{k})_{j'j}, & \forall j \in \mathcal{J}, \label{eq:atomic-service-transition-0}\\
    (n^t_{k+1})_{j,\infty} =& (n^t_k)_{j,\infty} - \sum_{j' \in \mathcal{J}} (\hat{a}^t_k)_{jj'}, &\forall j \in \mathcal{J}. \label{eq:atomic-service-transition-inf}
\end{align}
After assigning all \( K \) atomic actions, the system reaches post-decision state \( s^t_{K+1} \). The transition from \( s_{K+1}^{t} \) to the state of time $t+1$ incorporates the item arrival and service completion, and follows the same update given by \eqref{eq:setup-item-transition-dynamic}–\eqref{eq:setup-service-transition-inf-dynamic}. The sequential state transitions induced by these \( K \) atomic steps are illustrated in Figure~\ref{fig:atomic-steps}.

\begin{figure}[ht]
    \centering
    \includegraphics[width=0.7\linewidth]{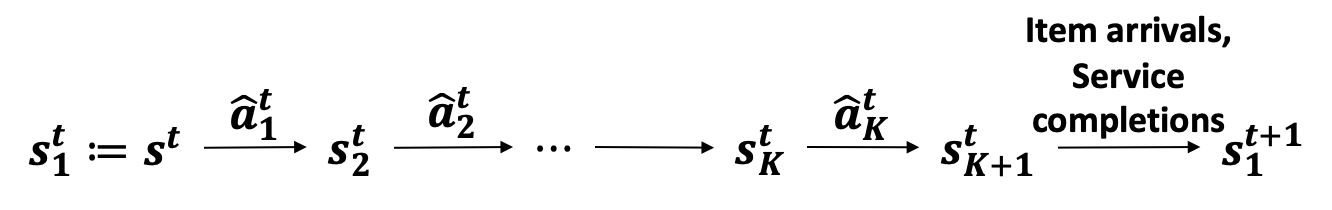}
    \caption{State transitions induced by atomic actions in each time $t$.}
    \label{fig:atomic-steps}
\end{figure}

This sequence of atomic steps defines a new MDP. In particular, the transition matrix for each atomic step \( k \in [K] \) is a binary matrix $\hat{P} := \left( \hat{P}(s^t_{k+1} \mid s^t_k, \hat{a}^t_k) \right)_{s^t_k, \hat{a}^t_k, s^t_{k+1}}$,
where each entry \( \hat{P}(s^t_{k+1} \mid s^t_k, \hat{a}^t_k) \in \{0, 1\}\) indicates whether or not the system transitions to state \( s^t_{k+1} \) from state \( s^t_k \) with the atomic action \( \hat{a}^t_k \) under the deterministic update rules \eqref{eq:atomic-item-transition}–\eqref{eq:atomic-service-transition-inf}.
After completing all \( K \) atomic steps, the system transitions from \( s^t_{K+1} \) to the state \( s^{t+1}_1 \) at time $t+1$ to account for new item arrivals and service completions, where we recall that the transition probability from $s^t_{K+1}$ to $s^{t+1}_1$ is given by $P^{\text{sys}}(s^{t+1}_1 \mid s^t_{K+1})$.

We define $\hat{r}(\hat{a}^t_k)$ as the reward obtained by the atomic action $\hat{a}^t_k \in \hat{\mathcal{A}}$ at atomic step $k \in [K]$ of time $t$: \[
    \hat{r}(\hat{a}^t_k) = \sum_{j \in \mathcal{J}} r_j \cdot \left(\sum_{j' \in \mathcal{J}} (\hat{a}^t_k)_{j'j} \right).
\]

We note that any sequence of atomic actions $\{\hat{a}^t_k \in \hat{\mathcal{A}}\}_{k \in [K]}$ induces an {\em equivalent} action $a^t \in \mathcal{A}$, where each entry satisfies $a^t_{j'j} = \sum_{k \in [K]} (\hat{a}^t_k)_{j'j}$ for all $j', j \in \mathcal{J}$. Conversely, for any action $a^t \in \mathcal{A}$, one can construct an equivalent sequence of atomic actions that induces the same schedule. 
The one-step reward given action $a^t$ and state $s^t$ is equal to the sum of the rewards obtained from each atomic action, plus the holding cost:
\begin{align} \label{eq:reward-equiv}
    r(s^t, a^t) = \sum_{k \in [K]} \hat{r}(\hat{a}^t_k) + r_H(s^t_{K+1}).
\end{align}

We consider two types of atomic policies:
\begin{itemize}
    \item[-] The {\em step-dependent} atomic policy: $\hat{\pi} := \{\hat{\pi}_k: \mathcal{S} \rightarrow \Delta(\hat{\mathcal{A}})\}_{k \in [K]}$ uses a distinct policy $\hat{\pi}_k$ to generate the $k$-th atomic at each atomic step. 
    The long-run average reward under a step-dependent atomic policy $\hat{\pi}$ is given by:
    \begin{align*}
        R(\hat{\pi}) =& \lim_{T \to \infty} \frac{1}{T} \mathbb{E}_{\hat{\pi}}\left[\sum_{t=1}^T \right.\\
        &\left.\left(\sum_{k=1}^{K} \hat{r}( \hat{a}_k^t) + r_H(s^t_{K+1}) \right) \right].
    \end{align*}
    We denote the optimal step-dependent atomic policy as $\hat{\pi}^*$ and the optimal long-run average reward as $R(\hat{\pi}^*)$.
    \item[-] The {\em step-independent atomic policy}: $\tilde{\pi}: \mathcal{S} \rightarrow \Delta(\hat{\mathcal{A}})$ uses the same policy function across all atomic steps. 
    The long-run average reward under a step-independent atomic policy $\tilde{\pi}$ is given by: 
    \begin{align*}
        R(\tilde{\pi}) =& \lim_{T \to \infty} \frac{1}{T} \mathbb{E}_{\tilde{\pi}}\left[\sum_{t=1}^T \right.\\
        &\left.\left(\sum_{k=1}^{K} \hat{r}( \hat{a}_k^t) + r_H(s^t_{K+1}) \right) \right].
    \end{align*}
    We denote the optimal step-independent atomic policy as $\tilde{\pi}^*$ and the optimal long-run average reward as $R(\tilde{\pi}^*)$.
\end{itemize}

We now present the main results of our work.
Theorem \ref{thm:atomic-optimal-aug} establishes that the optimal long-run average reward $R(\hat{\pi}^*)$ of step-dependent atomic policies equals the optimal long-run average reward $R(\pi^*)$ of the original policies. 

\begin{theorem} \label{thm:atomic-optimal-aug}
$R(\hat{\pi}^*) = R(\pi^*)$.
\end{theorem}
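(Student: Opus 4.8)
The plan is to prove $R(\hat\pi^*)\le R(\pi^*)$ and $R(\hat\pi^*)\ge R(\pi^*)$ separately, with the common engine being a correspondence between step‑by‑step feasible atomic‑action sequences and feasible joint actions, together with the reward‑decomposition identity \eqref{eq:reward-equiv}. For the upper bound I would first show that every step‑dependent atomic policy $\hat\pi=\{\hat\pi_k\}_{k\in[K]}$ induces a randomized Markovian joint policy $\mu$ on $\mathcal S$ with $R(\hat\pi)=R(\mu)$: given $s^t$, the atomic policy produces $\hat a^t_1,\dots,\hat a^t_K$ by sequential sampling and the deterministic updates \eqref{eq:atomic-item-transition}–\eqref{eq:atomic-service-transition-inf}, and I take $\mu(\cdot\mid s^t)$ to be the law of $a^t:=\sum_{k\in[K]}\hat a^t_k$. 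The key point is that any atomic sequence that is feasible step by step ($\hat a^t_k\in\hat{\mathcal A}_{s^t_k}$ for all $k$) sums to a feasible joint action $a^t\in\mathcal A_{s^t}$: constraint \eqref{eq:setup-feasibility-compatible} is preserved under addition of nonnegative matrices, and \eqref{eq:setup-feasibility-server}–\eqref{eq:setup-feasibility-item} hold by induction since each atomic step only consumes idle servers and buffered items and the running counts $n^t_k$ already reflect the services initiated so far within time $t$. Hence $\mu$ is a legitimate joint policy; by \eqref{eq:reward-equiv} the reward collected during time $t$ is exactly $r(s^t,a^t)$, and since $s^t_{K+1}$ equals the post‑decision state of $a^t$, the transition to $s^{t+1}$ is governed by $P(\cdot\mid s^t,a^t)$. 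Therefore $(s^t,a^t)_t$ has the same law under $\hat\pi$ and under $\mu$, so $R(\hat\pi)=R(\mu)\le R^*=R(\pi^*)$, and taking the supremum over $\hat\pi$ gives $R(\hat\pi^*)\le R(\pi^*)$.

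For the reverse inequality I would invoke the unichain assumption to obtain a solution $(g^*,h^*)$ of the original average‑reward optimality equation with $g^*=R^*$, and build atomic value functions by backward recursion over the atomic steps: set
\[
\hat h_K(s):=\max_{\hat a\in\hat{\mathcal A}_s}\Big\{\hat r(\hat a)+r_H(s\oplus\hat a)+\sum_{s'}P^{\text{sys}}(s'\mid s\oplus\hat a)\,h^*(s')\Big\}-g^*,
\]
and $\hat h_k(s):=\max_{\hat a\in\hat{\mathcal A}_s}\{\hat r(\hat a)+\hat h_{k+1}(s\oplus\hat a)\}$ for $k=K-1,\dots,1$, where $s\oplus\hat a$ denotes the deterministic atomic update and the maximum is over the nonempty set $\hat{\mathcal A}_s$ (which always contains $\apass$); let $\hat\pi_k$ select a maximizer. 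Unrolling the recursion writes $\hat h_1(s)$ as a maximum over step‑by‑step feasible atomic sequences $\hat a_1,\dots,\hat a_K$ of $\sum_k\hat r(\hat a_k)+r_H(s^{(K+1)})+\sum_{s'}P^{\text{sys}}(s'\mid s^{(K+1)})h^*(s')-g^*$, where $s^{(K+1)}$ is the state reached after applying these atomic actions in order. Using the converse half of the correspondence — every $a\in\mathcal A_s$ is realized by some step‑by‑step feasible atomic sequence, because monotonicity of the constraints makes every prefix of a feasible schedule feasible — together with \eqref{eq:reward-equiv}, this maximum equals $\max_{a\in\mathcal A_s}\{r(s,a)+\sum_{s'}P(s'\mid s,a)h^*(s')\}-g^*=h^*(s)$. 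Thus $\hat h_1=h^*$; moreover, since the sequential maximizers attain the max over atomic sequences, the joint policy $\mu^*$ induced by $\hat\pi$ (as above) attains the maximum in the original optimality equation at every state, so the standard verification argument for average‑reward MDPs gives $R(\mu^*)=g^*$. Combining, $R(\hat\pi^*)\ge R(\hat\pi)=R(\mu^*)=g^*=R(\pi^*)$, which together with the upper bound proves the theorem.

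The overall structure is short, so the substance lies in the correspondence lemma and its careful statement; I expect the main obstacle to be verifying, in both directions, that step‑by‑step feasibility of an atomic sequence is equivalent to feasibility of the summed joint action (the forward direction needs that truncating a feasible schedule keeps it feasible, i.e.\ monotonicity of \eqref{eq:setup-feasibility-server}–\eqref{eq:setup-feasibility-item}; the backward direction needs that accumulating feasible atomic actions never overruns the idle‑server counts $n^t_{j',\infty}$ or the buffer counts $z^t_i$), and that the post‑decision state $s^t_{K+1}$ and the aggregate service reward are invariant to the order in which the atomic actions are applied and coincide with those of the equivalent joint action, which is precisely what legitimizes \eqref{eq:reward-equiv} and the transition identification. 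A secondary technical point is the step‑index periodicity of the atomic MDP; this is sidestepped by the observation that, sampled at the start of each time step, the atomic process is exactly the chain induced by $\mu$ and hence aperiodic unichain by assumption, so $R(\hat\pi)$ is well defined and equals $R(\mu)$ without any separate ergodic argument.
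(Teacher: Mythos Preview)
Your proposal is correct, and its engine is the same as the paper's: the bijection between step‑by‑step feasible atomic sequences and feasible joint actions, combined with the reward decomposition \eqref{eq:reward-equiv} and the fact that the post‑decision state $s^t_{K+1}$ depends only on the summed action. Where you differ is in the packaging. The paper first invokes the periodic average‑reward optimality equation for the atomic MDP (Lemma~\ref{lemma:opt-eq-joint-aug}, via Puterman's Theorem~8.4.5) to obtain $(\hat g^*,\{\hat h_k^*\})$ with $R(\hat\pi^*)=\hat g^*$, and then shows in Lemma~\ref{lemma:opt-eq-aug} that $(\hat g^*,\hat h_1^*)$ also solves the original optimality equation, whence $R(\pi^*)=\hat g^*$; the two inequalities you would prove separately emerge in the paper as the two halves of the proof that $\hat h_1^*(s)$ equals the right‑hand side of the original Bellman equation. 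You instead bypass the periodic‑chain machinery entirely: your upper bound $R(\hat\pi^*)\le R(\pi^*)$ is a direct policy‑embedding argument, and for the lower bound you build the atomic value functions $\hat h_k$ backward from the original solution $(g^*,h^*)$ rather than extracting them from an atomic optimality theorem. The recursive unrolling you propose to show $\hat h_1=h^*$ is essentially the mirror image of the paper's proof of Lemma~\ref{lemma:opt-eq-aug}. What your route buys is self‑containment (no appeal to Puterman's periodic‑chain result); what the paper's route buys is that the atomic Bellman structure is made explicit, which later motivates the per‑step advantage functions in Section~\ref{sec:atomic-ppo}.
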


Building on Theorem \ref{thm:atomic-optimal-aug}, Theorem \ref{thm:atomic-policy-equivalence} further shows that there exists an optimal atomic policy that $R(\pi^*)$ can also be attained using a step-independent atomic action policy.

\begin{theorem} \label{thm:atomic-policy-equivalence}
$R(\tilde{\pi}^*) = R(\pi^*)$.
\end{theorem}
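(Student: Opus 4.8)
The plan is to establish the two inequalities $R(\tilde\pi^*)\le R(\pi^*)$ and $R(\tilde\pi^*)\ge R(\pi^*)$ separately. The first is immediate from Theorem~\ref{thm:atomic-optimal-aug}: every step-independent atomic policy $\tilde\pi$ is in particular the step-dependent atomic policy with $\hat\pi_k=\tilde\pi$ for all $k\in[K]$, so $R(\tilde\pi)\le R(\hat\pi^*)=R(\pi^*)$, and taking the supremum over $\tilde\pi$ gives $R(\tilde\pi^*)\le R(\pi^*)$. The work is in the reverse inequality, which I will prove by exhibiting a deterministic step-independent atomic policy that attains $R(\pi^*)$.

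The starting observation is that a deterministic step-independent atomic policy is automatically \emph{passing-last}: by the atomic transition rules \eqref{eq:atomic-item-transition}--\eqref{eq:atomic-service-transition-inf}, the passing action $\apass$ leaves the state unchanged, so once such a policy plays $\apass$ at some atomic step it plays $\apass$ at every subsequent atomic step of that time period. Moreover, each non-passing atomic action consumes an idle server via constraint \eqref{eq:setup-feasibility-server} and decrements the total idle-server count by one through \eqref{eq:atomic-service-transition-inf}, while a period begins with at most $K$ idle servers; hence at most $K$ non-passing atomic actions ever occur in a period, and the cap of $K$ atomic steps is never the binding constraint. This motivates constructing a \emph{passing-last atomic MDP} $\hat{\mathcal{M}}^{\mathrm{PL}}$ on the original state space $\mathcal{S}$ in which, from a state $s$, the system manager either (i) plays a feasible non-passing atomic action $\hat{a}\in\hat{\mathcal{A}}_s\setminus\{\apass\}$, collects $\hat{r}(\hat{a})$, and moves within the current period to the updated state, or (ii) ``passes,'' which ends the period (all remaining atomic steps being forced passes), collects the holding cost $r_H$ at the current post-decision state, and transitions to the next period via $P^{\text{sys}}$. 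Treating a pass as consuming one unit of time and a non-passing action as consuming none makes $\hat{\mathcal{M}}^{\mathrm{PL}}$ a finite semi-Markov decision process whose long-run average reward per unit time, for any deterministic stationary policy, coincides with the long-run average reward of the same policy used as a step-independent atomic policy in the atomic MDP (by the passing-last property and the $\le K$ bound). One also checks that $\hat{\mathcal{M}}^{\mathrm{PL}}$ is unichain and aperiodic, inherited from the original MDP since the period-start states evolve exactly as in the original chain.

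The crux is then to match the average-reward optimality equations. Let $(g^*,h^*)$ be an average-reward optimal pair for the original MDP, which exists by finiteness and the aperiodic-unichain assumption, with $g^*=R(\pi^*)$. The optimality equation of $\hat{\mathcal{M}}^{\mathrm{PL}}$, with the gain charged only on the (time-advancing) pass branch, reads, for each $s\in\mathcal{S}$,
\begin{align*}
    \hat{h}(s) = \max\Bigl\{\, \max_{\hat{a}\in\hat{\mathcal{A}}_s\setminus\{\apass\}}\bigl(\hat{r}(\hat{a})+\hat{h}(s\oplus\hat{a})\bigr),\ \ r_H(s)-\hat{g}+\sum_{s'}P^{\text{sys}}(s'\mid s)\,\hat{h}(s')\,\Bigr\},
\end{align*}
where $s\oplus\hat{a}$ denotes the state obtained by applying $\hat{a}$. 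I will verify that $(\hat{g},\hat{h})=(g^*,h^*)$ solves this equation by a one-step matching with the original optimality equation $h^*(s)+g^*=\max_{a\in\mathcal{A}_s}\bigl(r(s,a)+\sum_{s'}P(s'\mid s,a)h^*(s')\bigr)$: using reward decomposability \eqref{eq:reward-equiv}, the fact that a non-passing atomic action at $s$ together with a feasible schedule at $s\oplus\hat{a}$ compose to a feasible schedule at $s$ with the same post-decision state and the sum of the rewards (and conversely every nonzero feasible schedule at $s$ decomposes this way), and the feasibility of the ``do nothing'' schedule $\vec{0}\in\mathcal{A}_s$, the right-hand side of the displayed equation with $\hat{h}=h^*$ rearranges to exactly $\max_{a\in\mathcal{A}_s}\bigl(r(s,a)-g^*+\sum_{s'}P(s'\mid s,a)h^*(s')\bigr)=h^*(s)$. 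Hence $(g^*,h^*)$ satisfies the optimality equation of $\hat{\mathcal{M}}^{\mathrm{PL}}$; by unichain semi-Markov theory this forces the optimal average reward of $\hat{\mathcal{M}}^{\mathrm{PL}}$ to equal $g^*=R(\pi^*)$ and furnishes a deterministic stationary optimal policy for it, which by construction is passing-last and depends only on the state in $\mathcal{S}$, i.e.\ is a step-independent atomic policy attaining $R(\pi^*)$ in the atomic MDP. Together with the first inequality, $R(\tilde\pi^*)=R(\pi^*)$.

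I expect the main obstacle to be the rigorous treatment of $\hat{\mathcal{M}}^{\mathrm{PL}}$ as a (semi-Markov) decision process: writing down the correct average-reward optimality equation with the gain charged only on pass transitions, justifying existence of a stationary optimal policy and uniqueness of the gain under the unichain/aperiodicity structure, and checking carefully that the correspondence between feasible schedules $a\in\mathcal{A}_s$ and feasible passing-last atomic-action sequences respects constraints \eqref{eq:setup-feasibility-compatible}--\eqref{eq:setup-feasibility-item} at every intermediate atomic state (monotonicity of the remaining server and buffer ``room'' as atomic actions are appended). Once these are in place, the equivalence of the two optimality equations and the conclusion follow from reward decomposability as above.
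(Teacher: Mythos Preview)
Your proposal is correct and follows essentially the same approach as the paper: both exploit the passing-last structure of deterministic step-independent policies, introduce the corresponding reduced (passing-last) atomic MDP, and show that its optimality equation is satisfied by the pair $(g^*,h^*)$ coming from the original MDP, which yields a deterministic step-independent policy attaining $R(\pi^*)$. The paper organizes this through Lemmas~\ref{lemma:atomic-h-exist}, \ref{lemma:opt-eq-atomic}, and \ref{lemma:atomic-mdp-equivalence}---first constructing $\tilde h^*$ by induction on the idle-server count and then proving $\tilde h^*=h^*$ via a two-sided inequality and a direct unrolling to compute the gain---whereas you verify directly that $h^*$ solves the passing-last equation and package the conclusion through semi-Markov theory; these are organizational differences rather than a genuinely different route.
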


In Section \ref{sec:atomic-decomp-optimal}, we prove Theorem \ref{thm:atomic-optimal-aug}, and in Section \ref{sec:atomic-decomp-efficient}, we prove Theorem \ref{thm:atomic-policy-equivalence}.

\subsection{Proof of Theorem \ref{thm:atomic-optimal-aug}} \label{sec:atomic-decomp-optimal}
The proof of Theorem \ref{thm:atomic-optimal-aug} has two steps. First, Lemma \ref{lemma:opt-eq-joint-aug} presents the optimality equation for the MDP under step-dependent atomic policies. Second, Lemma \ref{lemma:opt-eq-aug} shows that the relative value function and the long-run average reward associated with the optimal step-dependent atomic policy satisfy the optimality equation of the original MDP. This implies that the optimal long-run average reward achieved by the step-dependent atomic policies is the same as that of the original policies.


We note that the MDP with atomic action decomposition has period $K$, corresponding to the number of atomic steps per time step. Applying Theorem 8.4.5 (p. 361 of \cite{PutermanMDP}), we derive the optimality equation, as stated in Lemma \ref{lemma:opt-eq-joint-aug} below.
\begin{lemma}[\cite{PutermanMDP}] \label{lemma:opt-eq-joint-aug}
    There exists a constant $\hat{g}^*$ and a set of functions $\hat{h}_k^*: \mathcal{S} \rightarrow \mathbb{R}$ for $k \in [K]$ that satisfy the following optimality equations for the MDP under step-dependent atomic policies:  
    \begin{align}
        \hat{h}_k^*(s_k) =& \left\{ 
        \begin{array}{ll}
            \max_{\hat{a}_k \in \hat{\mathcal{A}}_{s_k}}\left\{\hat{r}(\hat{a}_k) - \frac{1}{K} \hat{g}^* + \right.\\
            \quad \left.\sum_{s_{k+1} \in \mathcal{S}} \hat{P}(s_{k+1} \vert s_k, \hat{a}_k)\hat{h}_{k+1}^*(s_{k+1}) \right\}, \\
            \quad\quad \text{if } k < K, \\
            \max_{\hat{a}_K \in \hat{\mathcal{A}}_{s_K}}\left\{\hat{r}(\hat{a}_K) - \frac{1}{K} \hat{g}^* + \right.\\ 
            \quad \sum_{s_{K+1} \in \mathcal{S}} \hat{P}(s_{K+1} \vert s_K, \hat{a}_K) r_H(s_{K+1}) +\\
            \quad \left.\sum_{s'_1, s_{K+1} \in \mathcal{S}} P^{\text{sys}}(s'_1 \vert s_{K+1}) \hat{h}_{1}^*(s'_1) \right\}, \\
            \quad\quad \text{if } k = K,
        \end{array} \right. \notag\\
        & \hfill \forall s_k \in \mathcal{S}. \label{eq:opt-eq-atomic-aug}
    \end{align}
    The following step-dependent atomic policy $\hat{\pi}^* := \{\hat{\pi}^*_k: \mathcal{S} \rightarrow \Delta(\hat{\mathcal{A}})\}_{k \in [K]}$ is optimal, and achieves the optimal average reward $R(\hat{\pi}^*)= \hat{g}^*$:
    \begin{align} \label{eq:pi-star-atomic-aug}
        &\hat{\pi}_k^{*}(\hat{a}_k \vert s_k) := \begin{cases} 
            1, & \text{if } \hat{a}_k = \argmax_{\hat{a} \in \hat{\mathcal{A}}_{s}}\\
            &\left\{\hat{r}(\hat{a}) - \frac{1}{K} \hat{g}^* \right. + \sum_{s_{k+1} \in \mathcal{S}} \\
            &\ \left. \hat{P}(s_{k+1} \vert s_k, \hat{a})\hat{h}_{k+1}^{*}(s_{k+1}) \right\},\\
            0, & \text{otherwise,}\\
        \end{cases}\notag\\
        & \hfill \forall s_k \in \mathcal{S}, ~ \forall \hat{a}_k \in \hat{\mathcal{A}}_{s_k},\ \forall k < K, \notag\\ 
        &\hat{\pi}_K^{*}(\hat{a}_K \vert s_K) := \begin{cases} 
            1, & \text{if } \hat{a}_K = \argmax_{\hat{a} \in \hat{\mathcal{A}}_{s_K}}\\
            &\left\{\hat{r}(\hat{a}) - \frac{1}{K} \hat{g}^* + \sum_{s_{K+1} \in \mathcal{S}} \right.\\
            &\hat{P}(s_{K+1} \vert s_K, \hat{a}) r_H(s_{K+1}) +\\
            &\sum_{s'_1, s_{K+1} \in \mathcal{S}} \\
            &\left. P^{\text{sys}}(s'_1 \vert s_{K+1}) \hat{h}_{1}^*(s'_1) \right\},\\
            0, & \text{otherwise,}\\
        \end{cases} \notag\\
        & \hfill \forall s_K \in \mathcal{S}, \ \forall \hat{a}_K \in \hat{\mathcal{A}}_{s_K}.
    \end{align}
\end{lemma}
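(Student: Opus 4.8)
The plan is to recognize the within-time-step atomic dynamics, together with the stochastic transition between consecutive time steps, as a single Markov decision process, and to obtain the stated optimality equation as a direct specialization of the standard average-reward optimality theory for finite unichain MDPs (Theorem 8.4.5 of \cite{PutermanMDP}). The natural way to do this is to ``unfold'' the process into a stationary MDP on the augmented state space $\mathcal{S}\times[K]$, where an augmented state $(s,k)$ records both the physical state $s$ and the atomic step index $k\in[K]$.

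In this unfolded MDP the feasible action set at $(s,k)$ is $\hat{\mathcal{A}}_s$. For $k<K$, the transition is the deterministic update \eqref{eq:atomic-item-transition}--\eqref{eq:atomic-service-transition-inf} sending $(s_k,\hat{a}_k)$ to $(s_{k+1},k+1)$, and the one-step reward is $\hat{r}(\hat{a}_k)$. For $k=K$, the transition first applies the same deterministic atomic update to produce the post-decision state $s_{K+1}$ and then applies the stochastic system kernel $P^{\text{sys}}$ to land at $(s'_1,1)$; the one-step reward is $\hat{r}(\hat{a}_K)$ plus the holding cost, which given $\hat a_K$ is the deterministic quantity $\sum_{s_{K+1}}\hat{P}(s_{K+1}\mid s_K,\hat{a}_K)\,r_H(s_{K+1})$. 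Summing these per-atomic-step rewards over one full time step recovers \eqref{eq:reward-equiv}, so the long-run average reward of the unfolded MDP per atomic step equals $\tfrac1K$ times the quantity $R(\hat\pi)$ defined for step-dependent atomic policies. I would then verify the hypotheses of Theorem 8.4.5: the augmented state and action spaces are finite; and the unichain property transfers from the original MDP, since the phase coordinate cycles deterministically and the physical coordinate is driven by the original (unichain) kernel once every $K$ atomic steps, so every augmented state is reachable from every other. The unfolded chain is periodic with period $K$ in the phase coordinate, which is exactly the situation the cited theorem accommodates.

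Applying Theorem 8.4.5 produces a scalar gain $g$ and a bias function $h:\mathcal{S}\times[K]\to\mathbb{R}$ solving the average-reward optimality equation of the unfolded MDP, together with the associated greedy policy that attains $g$. Relabelling $\hat h_k^*(s):=h(s,k)$, setting $\hat g^*:=Kg$ (so that the per-atomic-step gain $g=\hat g^*/K$ is the term subtracted at each step), and plugging in the explicit transition structure---phase $k\to k+1$ deterministically for $k<K$, and phase $K\to 1$ via $P^{\text{sys}}$ composed with the deterministic atomic update---turns the generic optimality equation into exactly the two cases of \eqref{eq:opt-eq-atomic-aug}, and turns the greedy rule into the policy \eqref{eq:pi-star-atomic-aug}. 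Converting the per-atomic-step gain back to a per-time-step quantity then gives $R(\hat\pi^*)=Kg=\hat g^*$.

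The only real work is bookkeeping: one must place the holding-cost term on the $K$-th atomic step alone (so that per-step rewards telescope correctly to \eqref{eq:reward-equiv}), and track the post-decision-state convention for $s^t_{K+1}$ carefully when composing $\hat P$ with $P^{\text{sys}}$. The one substantive point requiring a short argument is that the unfolded MDP on $\mathcal{S}\times[K]$ is unichain with the period-$K$ phase structure anticipated by Theorem 8.4.5 --- but this follows immediately from the deterministic phase recursion together with the unichain assumption on the base MDP, so I do not expect a genuine obstacle here; the lemma is essentially a translation of a textbook result into the notation of the atomic MDP.
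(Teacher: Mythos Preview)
Your proposal is correct and matches the paper's approach: the paper does not give a detailed proof of this lemma but simply notes that the atomic MDP has period $K$ and invokes Theorem 8.4.5 of \cite{PutermanMDP}, which is precisely the unfolding-to-$\mathcal{S}\times[K]$ and citation you describe. Your write-up is in fact more explicit than the paper's, which treats the result as a direct specialization of the textbook theorem without spelling out the augmented-state construction or the unichain verification.
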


In Lemma \ref{lemma:opt-eq-aug}, we show that the solution $(\hat{g}^*, \hat{h}^*)$ to \eqref{eq:opt-eq-atomic-aug} also satisfies the optimality equation for the original MDP. This implies that the optimal long-run average reward achieved by the optimal step-dependent atomic policy is also the optimal long-run average reward achieved by the optimal original policy. Hence, we can conclude Theorem \ref{thm:atomic-optimal-aug}.
\begin{lemma} \label{lemma:opt-eq-aug}
Define $h^*: \mathcal{S} \rightarrow \mathbb{R}$ such that $h^*(s) := \hat{h}_1^*(s)$ for all $s \in \mathcal{S}$. Then, $(h^*, \hat{g}^*)$ also satisfies the optimality equation for the original MDP:
    \begin{align} 
        &h^*(s) = \notag\\
        &\max_{a \in \mathcal{A}_{s}} \left\{r(s, a) -\hat{g}^* + \sum_{s' \in \mathcal{S}} P(s' \vert s, a) h^{*}(s') \right\},\notag\\
        &\quad \forall s \in \mathcal{S}. \label{eq:orig-atomic-policy-construct-h-body-new-aug}
    \end{align}
    Furthermore, the following deterministic original policy $\pi^*$ achieves the optimal average reward $R(\pi^*)= \hat{g}^*$:
    \begin{align}
        \pi^{*}(a \vert s) =& \begin{cases}
            1, \quad &\text{if } a = \argmax_{a \in \mathcal{A}_{s}}\\
            &\left\{r(s, a) -\hat{g}^* + \sum_{s' \in \mathcal{S}} \right.\\
            &\quad \left. P(s' \vert s, a) h^*(s') \right\},\\
            0, \quad &\text{otherwise},
        \end{cases} \notag\\
        &\hfill \forall s \in \mathcal{S}. \label{eq:orig-atomic-policy-construct-pi-aug}
    \end{align}
\end{lemma}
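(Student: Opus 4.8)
The plan is to obtain the original average-reward optimality equation \eqref{eq:orig-atomic-policy-construct-h-body-new-aug} by telescoping the $K$ atomic steps of \eqref{eq:opt-eq-atomic-aug} into a single joint step. Starting from the $k=1$ case of \eqref{eq:opt-eq-atomic-aug}, I would use that the atomic kernel $\hat P$ is deterministic (each pair $(s_k,\hat a_k)$ determines a unique successor via \eqref{eq:atomic-item-transition}--\eqref{eq:atomic-service-transition-inf}) to replace $\sum_{s_2}\hat P(s_2\mid s_1,\hat a_1)\hat h_2^*(s_2)$ by $\hat h_2^*$ evaluated at that successor, then substitute the $k=2$ case of \eqref{eq:opt-eq-atomic-aug} for $\hat h_2^*$, and iterate down to $k=K$. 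Since the inner maximization over $\hat a_{k+1}$ interacts with $\hat a_k$ only through the deterministic intermediate state, the nested maxima coalesce into one maximum over the whole feasible atomic sequence $(\hat a_1,\dots,\hat a_K)$ (with $\hat a_k\in\hat{\mathcal A}_{s_k}$), and the $K$ copies of $-\tfrac1K\hat g^*$ sum to $-\hat g^*$. This gives
\[
h^*(s_1)=\hat h_1^*(s_1)=\max_{(\hat a_1,\dots,\hat a_K)}\Bigl\{\textstyle\sum_{k=1}^K\hat r(\hat a_k)-\hat g^*+r_H(s_{K+1})+\sum_{s_1'}P^{\text{sys}}(s_1'\mid s_{K+1})h^*(s_1')\Bigr\},
\]
where $s_{K+1}$ is the deterministic state reached from $s_1$ after the $K$ atomic actions and $h^*=\hat h_1^*$ by definition.

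The second step is to identify the right-hand side with a maximum over $\mathcal A_{s_1}$. A feasible atomic sequence aggregates to $a$ with $a_{j'j}=\sum_k(\hat a_k)_{j'j}$, and composing the deterministic update \eqref{eq:atomic-service-transition-0}--\eqref{eq:atomic-service-transition-inf} $K$ times coincides with the post-decision update \eqref{eq:setup-service-transition}--\eqref{eq:setup-service-transition-inf}, so $s_{K+1}$ equals the post-decision state $s_+$ of the aggregated action $a$; hence $\sum_k\hat r(\hat a_k)+r_H(s_{K+1})=r(s_1,a)$ by \eqref{eq:reward-equiv} and $\sum_{s_1'}P^{\text{sys}}(s_1'\mid s_{K+1})h^*(s_1')=\sum_{s'}P(s'\mid s_1,a)h^*(s')$ by the definition of $P$. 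It remains to check that the objective takes the same set of values over feasible atomic sequences as over $\mathcal A_{s_1}$: every feasible sequence aggregates to some $a\in\mathcal A_{s_1}$ (compatibility \eqref{eq:setup-feasibility-compatible} is inherited entrywise, and the server bound \eqref{eq:setup-feasibility-server} and item bound \eqref{eq:setup-feasibility-item} hold for the total because they hold at the last step that touches server type $j'$ resp.\ item class $i$), and conversely every $a\in\mathcal A_{s_1}$ admits a feasible atomic decomposition \emph{in any order} — here one uses that item counts are frozen across atomic steps by \eqref{eq:atomic-item-transition} and idle-server counts only decrease, so \eqref{eq:setup-feasibility-server} and \eqref{eq:setup-feasibility-item} are monotone in the partial assignment. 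Combining, the displayed maximum equals $\max_{a\in\mathcal A_{s_1}}\{r(s_1,a)-\hat g^*+\sum_{s'}P(s'\mid s_1,a)h^*(s')\}$, i.e.\ \eqref{eq:orig-atomic-policy-construct-h-body-new-aug}.

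For the ``furthermore'' part, having exhibited a bounded solution $(h^*,\hat g^*)$ of the original average-reward optimality equation, I would invoke the standard theory for aperiodic unichain MDPs (e.g.\ Theorems 8.4.1--8.4.4 of \cite{PutermanMDP}): the constant in the optimality equation is unique and equals the optimal gain, so $R(\pi^*)=\hat g^*$, and any decision rule that is greedy with respect to the right-hand side of \eqref{eq:orig-atomic-policy-construct-h-body-new-aug} — in particular the $\pi^*$ of \eqref{eq:orig-atomic-policy-construct-pi-aug} — is gain-optimal. Together with $R(\hat\pi^*)=\hat g^*$ from Lemma \ref{lemma:opt-eq-joint-aug}, this also yields Theorem \ref{thm:atomic-optimal-aug}.

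\textbf{Main obstacle.} The delicate points are (i) making the telescoping of the $K$ nested, state-dependent maximizations into a single maximization rigorous (this is where determinism of $\hat P$ is used crucially), and (ii) showing no feasible joint action is lost when we restrict to sequences feasible atomic-step-by-atomic-step, which rests on the monotonicity of the feasibility constraints \eqref{eq:setup-feasibility-server}--\eqref{eq:setup-feasibility-item} under partial assignment so that the decomposition order is irrelevant. The rest is substitution and matching terms against \eqref{eq:reward-equiv} and the definition of $P$.
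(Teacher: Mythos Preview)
Your proposal is correct and follows essentially the same approach as the paper. The only presentational difference is that the paper proves \eqref{eq:orig-atomic-policy-construct-h-body-new-aug} as two separate inequalities---using the optimal atomic sequence $(\hat a_1^\dagger,\dots,\hat a_K^\dagger)$ generated by $\hat\pi^*$ as a witness for $h^*(s)\le\max_a\{\cdots\}$, and decomposing the argmax joint action $a^*$ into atomic actions as a witness for $h^*(s)\ge\max_a\{\cdots\}$---whereas you collapse the $K$ nested maxima directly into a single maximum over feasible sequences and then identify its range with $\mathcal A_s$; the underlying ingredients (determinism of $\hat P$, reward decomposition \eqref{eq:reward-equiv}, and the two-way feasibility correspondence) are identical, and the paper likewise finishes by invoking Theorem~8.4.4 of \cite{PutermanMDP}.
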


\begin{proof}{Proof of Lemma \ref{lemma:opt-eq-aug}}
    To prove \eqref{eq:orig-atomic-policy-construct-h-body-new-aug}, we first argue that $h^*(s) \leq \max_{a \in \mathcal{A}_{s}} \left\{r(s, a) -\hat{g}^* + \sum_{s' \in \mathcal{S}} P(s' \vert s, a) h^{*}(s') \right\}$. Consider the optimal deterministic step-dependent atomic policy $\hat{\pi}^*$ that satisfies \eqref{eq:pi-star-atomic-aug}.
    For any state $s \in \mathcal{S}$, let $s_1, \dots, s_{K+1}$ with $s_1 = s$ be the sequence of states induced by $\hat{\pi}^{*}$, and let $\hat{a}^{\dagger}_1, \dots, \hat{a}^{\dagger}_{K}$ be the sequence of atomic actions. 
    Let $a^{\dagger}$ be the action {\em equivalent} to $(\hat{a}^{\dagger}_1, \dots, \hat{a}^{\dagger}_K)$ such that $a^{\dagger}_{j'j} = \sum_{k \in [K]} (\hat{a}^{\dagger}_k)_{j'j}, \ \forall j,j' \in \mathcal{J}$. Then, we have:
    \begin{align}
        &h^*(s) 
        = \hat{h}_1^*(s) \notag\\
        =& \sum_{k = 1}^K \hat{r}(\hat{a}^{\dagger}_k) + r_H(s_{K+1}) - \hat{g}^* \notag\\
        &\quad+ \sum_{s' \in \mathcal{S}} P^{\text{sys}}(s' \vert s_{K+1}) \hat{h}_1^*(s') \tag{a}\\
        =& \sum_{k = 1}^K \hat{r}(\hat{a}^{\dagger}_k) + r_H(s_{K+1}) - \hat{g}^* \notag\\
        &\quad+ \sum_{s' \in \mathcal{S}} P^{\text{sys}}(s' \vert s_{K+1}) \hat{P}(s_{K+1} \vert s_K, \hat{a}^{\dagger}_K) \cdot \notag\\
        &\quad\hat{P}(s_K \vert s_{K-1}, \hat{a}^{\dagger}_{K-1}) \dots \hat{P}(s_2 \vert s_1, \hat{a}^{\dagger}_1) \hat{h}_1^*(s') \tag{b}\\
        =& r(s, a^{\dagger}) - \hat{g}^* + \sum_{s' \in \mathcal{S}} P(s' \vert s, a^{\dagger}) h^*(s') \tag{c}\\
        \leq& \max_{a \in \mathcal{A}_{s}} \left\{r(s, a) -\hat{g}^* + \sum_{s' \in \mathcal{S}} P(s' \vert s, a) h^{*}(s') \right\}, \notag
    \end{align}
    where (a) is obtained by applying \eqref{eq:opt-eq-atomic-aug} recursively for $K$ steps; (b) is due to the fact that the state transitions across the $K$ atomic steps are deterministic, i.e. $\hat{P}(s_{K+1} \vert s_{K}, \hat{a}^{\dagger}_{K}) = \dots = \hat{P}(s_2 \vert s_1, \hat{a}^{\dagger}_1) = 1$; (c) is because $a^{\dagger}$ is equivalent to $(\hat{a}^{\dagger}_1, \dots, \hat{a}^{\dagger}_K)$ so the reward satisfies \eqref{eq:reward-equiv}. 
    
    Next, we show that $h^*(s) \geq \max_{a \in \mathcal{A}_{s}} \left\{r(s, a) -\hat{g}^* + \sum_{s' \in \mathcal{S}} P(s' \vert s, a) h^{*}(s') \right\}$. Let $a^*$ be the action generated by $\pi^*$ given state $s$. Let $(\hat{a}_1^*, \dots, \hat{a}_K^*)$ be any sequence of atomic actions that is equivalent to $a^*$ and let $(s_1^*, \dots, s_{K+1}^*)$ with $s_1^* = s$ be the sequence of induced states. Then, we have:
    \begin{align}
        &\max_{a \in \mathcal{A}_{s}} \left\{r(s, a) -\hat{g}^* + \sum_{s' \in \mathcal{S}} P(s' \vert s, a) h^{*}(s') \right\} \notag\\
        =& r(s, a^*) - \hat{g}^* + \sum_{s' \in \mathcal{S}} P(s' \vert s, a^*) h^*(s') \tag{a}\\
        =& \sum_{k = 1}^K \hat{r}(\hat{a}_k^*) + r_H(s_{K+1}^*) - \hat{g}^* \notag \\
        &\quad+ \sum_{s' \in \mathcal{S}} P^{\text{sys}}(s' \vert s_{K+1}^*) \hat{h}_1^*(s'), \tag{b}
    \end{align}
    where (a) is because $a^*$ is generated by the policy $\pi^*$ that satisfies \eqref{eq:orig-atomic-policy-construct-pi-aug}; (b) is because $a^*$ is equivalent to $(\hat{a}_1^*, \dots, \hat{a}_K^*)$.

    Since the state transition from $s_K^*$ to $s_{K+1}^*$ is deterministic, i.e. $\hat{P}(s_{K+1}^* \vert s_K^*, \hat{a}_K^*) = 1$ and $\hat{P}(s_{K+1} \vert s_K^*, \hat{a}_K^*) = 0$ for all $s_{K+1} \in \mathcal{S}$ such that $s_{K+1} \neq s_{K+1}^*$, we obtain:
    \begin{align}
        & \sum_{k = 1}^K \hat{r}(\hat{a}_k^*) + r_H(s_{K+1}^*) - \hat{g}^* \notag \\
        &\quad+ \sum_{s' \in \mathcal{S}} P^{\text{sys}}(s' \vert s_{K+1}^*) \hat{h}_1^*(s') \notag\\
        =& \sum_{k = 1}^{K-1} \hat{r}(\hat{a}_k^*) - \frac{K-1}{K} \hat{g}^* \notag\\ 
        &\quad + \hat{r}(\hat{a}_K^*) - \frac{1}{K} \hat{g}^* \notag \\
        &\quad + \sum_{s_{K+1} \in \mathcal{S}} \hat{P}(s_{K+1} \vert s_K^*, \hat{a}_K^*) r_H(s_{K+1}^*) \notag \\
        &\quad+ \sum_{s' \in \mathcal{S}} P^{\text{sys}}(s' \vert s_{K+1}^*) \hat{h}_1^*(s').
    \end{align}
    Then, we have:
    \begin{align*}
        & \sum_{k = 1}^{K-1} \hat{r}(\hat{a}_k^*) - \frac{K-1}{K} \hat{g}^* \notag\\ 
        &\quad + \hat{r}(\hat{a}_K^*) - \frac{1}{K} \hat{g}^* \notag \\
        &\quad + \sum_{s_{K+1} \in \mathcal{S}} \hat{P}(s_{K+1} \vert s_K^*, \hat{a}_K^*) r_H(s_{K+1}^*) \notag \\
        &\quad+ \sum_{s' \in \mathcal{S}} P^{\text{sys}}(s' \vert s_{K+1}^*) \hat{h}_1^*(s')\\
        \leq& \sum_{k=1}^{K-1} \hat{r}(\hat{a}_k^*) - \frac{K-1}{K} \hat{g}^* + \notag\\
        &\max_{\hat{a} \in \hat{\mathcal{A}}_{s_{K}^*}}\left\{\hat{r}(\hat{a}) - \frac{1}{K} \hat{g}^* \right. \notag\\
        &+ \sum_{s_{K+1} \in \mathcal{S}} \hat{P}(s_{K+1} \vert s_{K}^*, \hat{a}) r_H(s_{K+1}) \notag\\
        &+ \left. \sum_{s', s_{K+1} \in \mathcal{S}} \hat{P}(s_{K+1} \vert s_K, \hat{a}) P^{\text{sys}}(s' \vert s_{K+1}) \hat{h}_1^*(s') \right\} \notag \\
        =& \sum_{k=1}^{K-1} \hat{r}(\hat{a}_k^*) - \frac{K-1}{K} \hat{g}^* + \hat{h}_K^*(s_K^*), \tag{a}
    \end{align*}
    where (a) is obtained by \eqref{eq:opt-eq-atomic-aug}.
    
    Repeating the same argument for $K-2$ steps, we obtain:
    \begin{align*}
        & \sum_{k=1}^{K-1} \hat{r}(\hat{a}_k^*) - \frac{K-1}{K} \hat{g}^* + \hat{h}_K^*(s_K^*)\\
        \leq& \max_{\hat{a} \in \hat{\mathcal{A}}_{s_1^*}} \left\{\hat{r}(\hat{a}) - \frac{1}{K} \hat{g}^* \right. \notag\\
        &\quad + \left. \sum_{s' \in \mathcal{S}} \hat{P}(s' \vert s_1^*, \hat{a}) \hat{h}^*_2(s') \right\} \\
        =& \hat{h}_1^*(s_1^*) \\
        =& h^*(s_1^*) = h^*(s).
    \end{align*}
    Therefore, we can obtain that
    \begin{align*}
        &h^*(s) \\
        \geq& \max_{a \in \mathcal{A}_{s}} \left\{r(s, a) -\hat{g}^* + \sum_{s' \in \mathcal{S}} P(s' \vert s, a) h^{*}(s') \right\}.
    \end{align*}
    
    As a result, we have
    \begin{align*} 
        &h^*(s) \\
        =& \max_{a \in \mathcal{A}_{s}} \left\{r(s, a) -\hat{g}^* + \sum_{s' \in \mathcal{S}} P(s' \vert s, a) h^{*}(s') \right\},\\
        &\quad \forall s \in \mathcal{S}.
    \end{align*}

    Consequently, following Theorem 8.4.4 (p. 361 of \cite{PutermanMDP}), the policy $\pi^*$ given by \eqref{eq:orig-atomic-policy-construct-pi-aug} achieves the optimal long-run average reward $R(\pi^*) = \hat{g}^*$.
    \hfill $\square$
\end{proof}

\subsection{Proof of Theorem \ref{thm:atomic-policy-equivalence}} \label{sec:atomic-decomp-efficient}
We prove Theorem \ref{thm:atomic-policy-equivalence} by constructing a deterministic step-independent atomic policy that attains the optimal long-run average reward $R(\pi^*)$. Consider any deterministic policy $\tilde{\pi}: \mathcal{S} \rightarrow \hat{\mathcal{A}}$, for any atomic step $k \in [K]$, if the generated atomic action is $\apass$ (i.e. passing), then the atomic reward is $0$ and $s_{k+1} = s_k$. Since the policy is deterministic and the state has not changed, the atomic action at the atomic step $k+1$ is still $\apass$ and the state remains unchanged. This will continue for all subsequent atomic steps until $k = K$. Therefore, given any deterministic step-independent atomic policy, once a passing atomic action is generated, the state will remain unchanged for the rest of the atomic steps. This observation allows us to restrict our attention to a reduced model, the {\em passing-last} atomic MDP, where the system terminates the current time step and transitions to the next time step whenever the passing atomic action is generated. Specifically, under the passing-last atomic MDP, the system evolves according to \eqref{eq:atomic-item-transition}–\eqref{eq:atomic-service-transition-inf} when a non-passing atomic action is generated, and according to \eqref{eq:setup-item-transition-dynamic}–\eqref{eq:setup-service-transition-inf-dynamic} when $\apass$ is generated.
We formally show the equivalence between the passing-last atomic MDP and the original atomic MDP in Lemma \ref{lemma:atomic-mdp-equivalence}. This equivalence allows us to carry out the rest of the proof within the passing-last atomic MDP without loss of generality.

In Lemma~\ref{lemma:atomic-h-exist}, we construct the relative value function $\tilde{h}^*$ for the passing-last atomic MDP from the solution $(\hat{g}^*, h^*)$ to the optimality equation \eqref{eq:orig-atomic-policy-construct-h-body-new-aug} of the original MDP given in Lemma \ref{lemma:opt-eq-aug}, and show that $\tilde{h}^*$ is well-defined.

\begin{lemma} \label{lemma:atomic-h-exist}
     Let $(\hat{g}^*, h^*)$ be the solution to \eqref{eq:orig-atomic-policy-construct-h-body-new-aug}. There exists a function $\tilde{h}^{*}: S \to \mathbb{R}$ that satisfies: 
    \begin{align} 
        &\tilde{h}^{*}(s) =\notag\\
        &\max_{\hat{a} \in \hat{\mathcal{A}}_{s}}\left\{ \left[r_H(s) - \hat{g}^* + \sum_{s' \in \mathcal{S}} P^{\text{sys}}(s' \vert s) h^*(s') \right]\mathds{1}_{\hat{a} = \apass} \right. \notag\\
        &\quad+ \left.\left[\hat{r}(\hat{a}) + \sum_{s^{'} \in \mathcal{S}} \hat{P}(s^{'} \vert s, \hat{a}) \tilde{h}^{*}(s^{'}) \right] \mathds{1}_{\hat{a} \neq \apass} \right\}. \label{eq:orig-atomic-policy-construct-h-body}
    \end{align}
\end{lemma}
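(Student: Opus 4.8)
\textbf{Proof proposal for Lemma \ref{lemma:atomic-h-exist}.}

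The plan is to construct $\tilde{h}^*$ explicitly by backward induction on the number of idle servers in a state, exploiting the fact that each non-passing atomic action consumes exactly one idle server while $\apass$ is the terminal case. For a state $s = (z,n) \in \mathcal{S}$, I would define its \emph{rank} $m(s) := \sum_{j' \in \mathcal{J}} n_{j',\infty} \in \{0,1,\dots,K\}$, the total number of idle servers. Two structural observations drive the argument. First, $\apass$ is feasible at every state (it trivially satisfies \eqref{eq:setup-feasibility-compatible}–\eqref{eq:setup-feasibility-item}), so $\hat{\mathcal{A}}_s$ is always nonempty and finite. Second, whenever a non-passing feasible atomic action $\hat{a} \in \hat{\mathcal{A}}_s$ (with $\hat{a}_{j'j} = 1$) is applied, the deterministic update \eqref{eq:atomic-item-transition}–\eqref{eq:atomic-service-transition-inf} sends $s$ to the unique state $s'$ with one fewer idle server that last ran a type-$j'$ service and one more age-$0$ service of type $j$, so $m(s') = m(s) - 1$; in particular, when $m(s) = 0$ no non-passing atomic action is feasible and $\hat{\mathcal{A}}_s = \{\apass\}$.

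Given these observations, I would define $\tilde{h}^*$ level by level in $m$. For states of rank $0$, the maximum in \eqref{eq:orig-atomic-policy-construct-h-body} is over the singleton $\{\apass\}$, so one is forced to set
\[
    \tilde{h}^*(s) := r_H(s) - \hat{g}^* + \sum_{s' \in \mathcal{S}} P^{\text{sys}}(s' \mid s)\, h^*(s'),
\]
which is a well-defined real number since $(\hat{g}^*, h^*)$ is provided by Lemma \ref{lemma:opt-eq-aug}. Inductively, suppose $\tilde{h}^*$ has been defined on all states of rank strictly below $m$, for some $1 \leq m \leq K$. For a state $s$ with $m(s) = m$, every non-passing action in $\hat{\mathcal{A}}_s$ leads deterministically to a state of rank $m-1$ on which $\tilde{h}^*$ is already defined, while the $\apass$ branch only invokes the fixed function $h^*$; hence the right-hand side of \eqref{eq:orig-atomic-policy-construct-h-body} is a well-defined maximum over the finite nonempty set $\hat{\mathcal{A}}_s$, and I set $\tilde{h}^*(s)$ equal to it. After at most $K$ levels every state has been assigned a value, and by construction $\tilde{h}^*$ satisfies \eqref{eq:orig-atomic-policy-construct-h-body} everywhere.

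The main obstacle — and essentially the only place the SPN structure is used — is verifying the rank monotonicity: that a feasible non-passing atomic action genuinely removes one idle server, which one reads off from \eqref{eq:atomic-service-transition-0}–\eqref{eq:atomic-service-transition-inf} (including the degenerate case $j = j'$, where the age-$0$ count of type $j$ increases by one and the idle count of servers last running type $j$ decreases by one, for a net decrease of one in $\sum_{j'} n_{j',\infty}$). Once this is pinned down, together with the feasibility of $\apass$ at all states, the recursion in \eqref{eq:orig-atomic-policy-construct-h-body} is well-founded: it never loops back through $\tilde{h}^*$ via the passing action, and it bottoms out after at most $K$ non-passing steps, so no fixed-point or contraction machinery is needed and the backward induction goes through routinely.
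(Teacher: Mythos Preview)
Your proposal is correct and follows essentially the same approach as the paper: the paper also partitions $\mathcal{S}$ by the number of idling servers (writing $\mathcal{S}[k]$ for your rank-$k$ states), sets $\tilde{h}^*$ on $\mathcal{S}[0]$ via the passing branch, and then defines it inductively on $\mathcal{S}[1],\dots,\mathcal{S}[K]$ using that each non-passing atomic action lands in $\mathcal{S}[k-1]$. Your write-up is in fact slightly more explicit about why $\apass$ is always feasible and why the rank strictly drops, but the argument is the same.
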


Then, Lemma~\ref{lemma:opt-eq-atomic} shows that the relative value function $\tilde{h}^*$ coincides with the relative value function $h^*$ under the original MDP and that the corresponding deterministic step-independent atomic policy $\tilde{\pi}^*$ achieves the optimal long-run average reward $R(\pi^*)$.
The proofs of Lemmas \ref{lemma:atomic-h-exist}-\ref{lemma:atomic-mdp-equivalence} are provided in Appendix~\ref{sec:appendix-reduced-atomic}.

\begin{lemma} \label{lemma:opt-eq-atomic}
    For all $s \in S$, $\tilde{h}^{*}(s) = h^*(s)$. 
    Furthermore, the following deterministic atomic policy $\tilde{\pi}^{*}$ achieves the long-run average reward $R(\tilde{\pi}^{*})= \hat{g}^* = R(\pi^*)$:
    \begin{align}
        &\tilde{\pi}^{*}(\hat{a} \vert s) = \notag\\
        &\quad \begin{cases}
            1, \quad &\text{if } \hat{a} = \argmax_{\hat{a} \in \hat{\mathcal{A}}_{s}}\\
            &\quad\left\{ \left[r_H(s) -\hat{g}^* \right.\right.\\
            &\quad\quad \left.\left.+ \sum_{s' \in \mathcal{S}} P^{\text{sys}}(s' \vert s) \tilde{h}^{*}(s') \right]\mathds{1}_{\hat{a} = \apass} \right.\\
            &\quad + \left.\left[\hat{r}(\hat{a}) \right.\right.\\ 
            &\quad\quad \left.\left.- \sum_{s^{'} \in \mathcal{S}} \hat{P}(s^{'} \vert s, \hat{a}) \tilde{h}^{*}(s^{'}) \right] \mathds{1}_{\hat{a} \neq \apass} \right\},\\
            0, \quad &\text{otherwise},
        \end{cases} \notag\\ 
        &\quad \forall s \in \mathcal{S}. \label{eq:orig-atomic-policy-construct-pi}
    \end{align}
\end{lemma}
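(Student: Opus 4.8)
The heart of the argument is the identity $\tilde h^*=h^*$; the optimality of $\tilde\pi^*$ then follows by combining it with Lemma~\ref{lemma:atomic-mdp-equivalence} and the standard average-reward verification theorem. For the identity, I would take the function $\tilde h^*$ guaranteed by Lemma~\ref{lemma:atomic-h-exist}, which satisfies \eqref{eq:orig-atomic-policy-construct-h-body}, and unroll that equation over the non-passing atomic steps. The recursion has depth at most $K$: each non-passing atomic action consumes one idle server, so after at most $K$ of them no idle server remains, $\apass$ is the only feasible atomic action, and the maximizer is forced to be $\apass$. Because the atomic transitions $\hat P$ are deterministic, unrolling yields
\begin{align*}
\tilde h^*(s)=\max_{(\hat a_1,\dots,\hat a_m)}\left\{\sum_{k=1}^{m}\hat r(\hat a_k)+r_H(s_{m+1})-\hat g^*+\sum_{s'\in\mathcal{S}}P^{\text{sys}}(s'\vert s_{m+1})\,h^*(s')\right\},
\end{align*}
where the maximum ranges over all finite sequences of feasible non-passing atomic actions out of $s$ (each terminated by $\apass$) and $s_{m+1}$ is the deterministic state they induce. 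I would then match this with the original optimality equation \eqref{eq:orig-atomic-policy-construct-h-body-new-aug}: any such sequence, padded with $K-m$ copies of $\apass$, is equivalent to a feasible action $a\in\mathcal{A}_s$ whose post-decision state is exactly $s_{m+1}$, and conversely every $a\in\mathcal{A}_s$ is realized by the ordering of its atomic decomposition that places the non-passing actions first — an ordering feasible at every intermediate step because constraints \eqref{eq:setup-feasibility-server}--\eqref{eq:setup-feasibility-item} remain satisfied under partial assignments. Using the reward decomposition \eqref{eq:reward-equiv} and $P(s'\vert s,a)=P^{\text{sys}}(s'\vert s_{m+1})$, the bracketed quantity depends only on $a$ and equals $r(s,a)-\hat g^*+\sum_{s'}P(s'\vert s,a)h^*(s')$. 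Hence $\tilde h^*(s)=\max_{a\in\mathcal{A}_s}\{r(s,a)-\hat g^*+\sum_{s'}P(s'\vert s,a)h^*(s')\}$, which equals $h^*(s)$ by Lemma~\ref{lemma:opt-eq-aug}.

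With $\tilde h^*=h^*$, the policy $\tilde\pi^*$ of \eqref{eq:orig-atomic-policy-construct-pi} is the greedy policy for \eqref{eq:orig-atomic-policy-construct-h-body}. Unrolling its choices within a time step exactly as above, the concatenation of the atomic actions it selects (padded with $\apass$) is an action attaining the maximum in \eqref{eq:orig-atomic-policy-construct-h-body-new-aug}; call it $\tilde a(s)$. Thus $s\mapsto\tilde a(s)$ is a greedy policy for the original optimality equation, so by Theorem~8.4.4 of \cite{PutermanMDP} it achieves the optimal gain $\hat g^*=R(\pi^*)$ in the original MDP. Since running $\tilde\pi^*$ in the passing-last atomic MDP produces, time step by time step, the same post-decision state ($s_{m+1}$ under $\tilde a(s)$) and the same reward ($r(s,\tilde a(s))$) as $s\mapsto\tilde a(s)$ in the original MDP, the long-run average reward of $\tilde\pi^*$ in the passing-last atomic MDP equals $\hat g^*$. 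Finally, Lemma~\ref{lemma:atomic-mdp-equivalence} transfers this value to the original atomic MDP — where $\tilde\pi^*$ simply repeats $\apass$ after it first passes — giving $R(\tilde\pi^*)=\hat g^*=R(\pi^*)$.

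The delicate part is the identity step: making the unrolling rigorous by bounding the recursion depth by $K$, checking that reordering an atomic decomposition with the non-passing actions first keeps every intermediate partial assignment feasible, and verifying that the value of a non-passing sequence depends only on the schedule it induces, so that the maximization over sequences collapses to the maximization over actions in $\mathcal{A}_s$. Once the identity is in hand, the optimality of $\tilde\pi^*$ is essentially bookkeeping on top of Lemma~\ref{lemma:atomic-mdp-equivalence}, Lemma~\ref{lemma:opt-eq-aug}, and the verification theorem.
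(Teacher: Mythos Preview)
Your proposal is correct and takes essentially the same approach as the paper. The paper establishes $\tilde h^*=h^*$ via two separate inequalities (one using the specific sequence generated by $\tilde\pi^*$, the other using an atomic decomposition of the original optimal action $a^*$) rather than your single maximization-collapse argument, and it verifies $R(\tilde\pi^*)=\hat g^*$ by unrolling the fixed-point equation over $T$ time steps and passing to the limit rather than coupling with the induced joint-action policy and invoking Puterman's Theorem~8.4.4---but these are organizational differences, not substantive ones.
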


Lemma \ref{lemma:opt-eq-atomic} concludes the proof of Theorem \ref{thm:atomic-policy-equivalence}.
Thanks to Theorem \ref{thm:atomic-policy-equivalence}, we establish that the training can be restricted to step-independent atomic policies without loss of optimality, which further reduces the policy space. In the next step, we demonstrate how to obtain the optimal step-independent atomic policy using reinforcement learning approaches.


\section{Atomic-PPO Algorithm} \label{sec:atomic-ppo}
In this section, we present a deep reinforcement learning (RL) algorithm, referred as the {\em Atomic-PPO} (Algorithm \ref{algo:atomic-ppo-vanilla}), to train the optimal step-independent atomic policy. Atomic-PPO integrates our atomic action decomposition into the popular deep RL algorithm, Proximal Policy Optimization (PPO) \citep{schulman2017proximal}.

\begin{algorithm}[H]
    \SetAlgoLined
    \caption{Atomic-PPO} \label{algo:atomic-ppo-vanilla}
    \KwInputs{Number of policy iterations $N$, number of trajectories per policy iteration $M$, number of time steps $T$ in each trajectory, initial atomic policy function $\tilde{\pi}^{}_{\theta_0}$.}
    \For{policy iteration $\ell = 1, \dots, N$}{
        Run policy $\tilde{\pi}^{}_{\theta_{\ell - 1}}$ to collect $M$ trajectories, each has length $T$ time steps. Collect data as in \eqref{eq:atomic-ppo-data}. \\
        Estimate the advantage functions $\bar{A}_{k, \theta_{\ell-1}}(s^{t, m}_{k}, \hat{a}^{t, m}_{k})$ for each atomic step $k \in [K]$ at time $t \in [T]$ of trajectory $m \in [M]$, as in \eqref{eq:atomic-ppo-advantage}.\\ 
        Update the policy parameters by maximizing the surrogate objective \eqref{eq:atomic-ppo-obj} to obtain the new policy $\tilde{\pi}_{\theta_{\ell}}$.
    }
    \Return{atomic policy $\tilde{\pi}^{}_{\theta_{N}}$}
\end{algorithm}

We parameterize the step-independent atomic action policy $\tilde{\pi}$ using a neural network, and denote the parameterized policy as $\tilde{\pi}_{\theta}$, where $\theta$ represents the parameters of the neural network. We initialize the neural network parameters to be $\theta_0$. In each policy iteration $\ell = 1, \dots, N$, we collect a dataset $\mathrm{Data}_{\theta_{\ell-1}}^{M}$ by simulating $M$ trajectories using the atomic action policy $\tilde{\pi}^{}_{\theta_{\ell-1}}$ from the previous iteration, truncating each trajectory to $T$ time steps.
Here, we set $T$ to be sufficiently large so that the state distribution at time $T$ is close to the stationary state distribution, regardless of the initial state. The dataset collected at each time $t \in [T]$ of trajectory $m \in [M]$ includes the sequence of atomic actions $(\hat{a}^{t,m}_k)_{k \in [K]}$ generated by the policy and the induced sequence of states $(s^{t,m}_k)_{k \in [K+1]}$.
\begin{align} \label{eq:atomic-ppo-data}
    &\mathrm{Data}_{\theta_{\ell-1}}^{M} := \notag\\
    &\left\{ \left[(s^{t,m}_k)_{k \in [K+1]}, (\hat{a}^{t,m}_k)_{k \in [K]} \right]_{t \in [T]} \right\}_{m \in [M]}.
\end{align}

Using the collected data, we estimate the following quantities:
\begin{enumerate}
    \item[(i)] The empirical estimate $\bar{g}_{\theta_{\ell-1}}$ of the long-run average reward $g_{\theta_{\ell-1}} := R(\tilde{\pi}^{}_{\theta_{\ell-1}})$ for policy $\tilde{\pi}^{}_{\theta_{\ell-1}}$, computed as the total rewards averaged over $M$ trajectories:
    \begin{align}
        &\bar{g}_{\theta_{\ell-1}} = \frac{1}{MT} \sum_{m \in [M]} \sum_{t \in [T]} \notag \\
        &\quad \left(\sum_{k \in [K]} \hat{r}(\hat{a}^{t,m}_k) + r_H(s^{t,m}_{K+1}) \right).
    \end{align}
    \item[(ii)] The relative value function $\hat{h}_{k,\theta_{\ell-1}}: \mathcal{S} \rightarrow \mathbb{R}$ given policy $\tilde{\pi}_{\theta_{\ell-1}}$ at each atomic step $k \in [K]$ quantifies the difference between the expected total reward starting from a given state and the long-run average reward under the policy $\tilde{\pi}_{\theta_{\ell-1}}$:
    \begin{align} \label{eq:h-def-standard-limit-main}
        & \hat{h}_{k,\theta_{\ell-1}}(s) \notag\\
        =& \mathbb{E}_{\hat{\pi}_{\theta_{\ell-1}}} \left[\sum_{\ell = k }^K \left( \hat{r}(\hat{a}_{\ell}^1) - \frac{1}{K} \hat{g}_{\hat{\pi}} \right) + r_H(s^1_{K+1}) \Bigg\lvert s^1_k = s \right] \notag \\
        &\quad+ \sum_{t = 2}^{\infty} \mathbb{E}_{\hat{\pi}_{\theta_{\ell-1}}} \notag\\
        &\quad \left[\sum_{\ell = 1}^K \left( \hat{r}(\hat{a}_{\ell}^t) - \frac{1}{K} \hat{g}_{\hat{\pi}} \right) + r_H(s^t_{K+1}) \Bigg\lvert s^1_k = s \right], \notag\\
        &\quad\quad \forall s \in \mathcal{S}.
    \end{align}
    \begin{remark}
        In Appendix \ref{sec:appendix-h-def}, we establish in Proposition \ref{prop:h-def-well-defined} that the relative value function $\hat{h}_{k,\theta_{\ell-1}}$ defined by \eqref{eq:h-def-standard-limit-main} is well defined. Furthermore, in Proposition \ref{prop:h-def-equivalence}, we show that this definition is equivalent, up to an additive constant, to the standard definition of the relative value function for periodic chains based on the Cesàro limit, as presented in \cite{PutermanMDP}. We adopt the form in \eqref{eq:h-def-standard-limit-main} because it simplifies the empirical estimation, which we will detail next. Importantly, its equivalence with the standard definition ensures that it does not alter the training process.
    \end{remark}
    
    We compute the empirical estimate of the relative value function using the $TD(\lambda)$ approach, which has the convergence guarantee to the true relative value function \citep{tsitsiklis1999average}.
    Specifically, for each atomic step $k \in [K]$, we approximate the relative value function $\hat{h}_{k, \theta_{\ell-1}}$ with a neural network $\bar{h}_{k, \psi_{\ell-1}}: \mathcal{S} \to \mathbb{R}$, where $\psi_{\ell-1}$ denotes the neural network parameters. At each state $s_k^{t,m}$ of the $k$-th atomic step at time $t \in [T]$ in trajectory $m \in [M]$, we construct an empirical $TD(\lambda)$ estimate, denoted $\bar{h}_{k,\theta_{\ell-1}}^{t,m}$, to approximate $\hat{h}_{k,\theta_{\ell-1}}(s_k^{t,m})$:
    \begin{align} \label{eq:h-estimate}
        &\bar{h}_{k,\theta_{\ell-1}}^{t,m} \notag \\
        =& (1-\lambda) \sum_{j = t}^T \lambda^{j-t} \cdot \notag\\ 
        &\quad \left\{\sum_{\ell = k}^K \left(\hat{r}(\hat{a}^{t,m}_{\ell}) - \frac{1}{K}\bar{g}_{\theta_{\ell-1}} \right) \right. + r_H(s_{K+1}^{t,m})  \notag\\
        &+ \sum_{t' = t+1}^j \left[\sum_{\ell = 1}^K \left(\hat{r}(\hat{a}^{t',m}_{\ell}) - \frac{1}{K}\bar{g}_{\theta_{\ell-1}} \right) + r_H(s_{K+1}^{t',m}) \right] \notag \\
        &\quad \left. +\bar{h}_{1,\psi_{\ell-2}}(s^{j+1,m}_1) \right\}.
    \end{align}
    In \eqref{eq:h-estimate}, the inner expression for each $j$ is an empirical approximation of the value defined in \eqref{eq:h-def-standard-limit-main}, truncated at horizon $j$. The sums beyond $j+1$ is estimated using the neural network $\bar{h}_{1,\psi_{\ell-2}}$ obtained from the previous policy iteration, where the network output $\bar{h}_{1,\psi_{-1}}(s^{j+1,m}_1)$ is initialized randomly for the initial policy iteration. The estimate $\bar{h}_{k,\theta_{\ell-1}}^{t,m}$ is then computed by exponentially averaging these approximations across different horizons $j \geq t$.
    
    We learn the parameterized function $\bar{h}_{k, \psi_{\ell-1}}$ by minimizing the mean-square loss given the empirical TD($\lambda$) estimates:
    \begin{align}
        \sum_{m \in [M]} \sum_{t \in [T]} \left(\bar{h}_{k, \psi_{\ell-1}}(s^{t,m}_k) - \bar{h}^{t,m}_{k,\theta_{\ell-1}} \right)^2.
    \end{align}
    \item[(iii)] The advantage functions $\hat{A}_{\theta_{\ell-1}} := \{\hat{A}_{k,\theta_{\ell-1}}: \mathcal{S} \times \hat{\mathcal{A}} \rightarrow \mathbb{R}\}_{k \in [K]}$ of $\tilde{\pi}^{}_{\theta_{\ell-1}}$: The advantage function $\hat{A}_{k,\theta_{\ell-1}}(s,\hat{a})$ quantifies the changes in the long-run average reward by selecting atomic action $\hat{a} \in \hat{\mathcal{A}}$ at state $s \in \mathcal{S}$ during atomic step $k \in [K]$ instead of following the atomic policy $\tilde{\pi}_{\theta_{\ell-1}}$, while assuming $\tilde{\pi}_{\theta_{\ell-1}}$ is followed for all subsequent steps. In particular, the advantage function $\hat{A}_{\theta_{\ell-1}}$ for the atomic policy $\tilde{\pi}_{\theta_{\ell-1}}$ is defined as 
    \begin{align*}
        &\hat{A}_{k, \theta_{\ell-1}}(s, \hat{a}) \\
        =& \begin{cases}
            &\hat{r}(\hat{a}) - \frac{1}{K} \hat{g}_{\theta_{\ell-1}} + \\
            &\quad \sum_{s' \in \mathcal{S}} \hat{P}(s' \vert s, \hat{a}) \hat{h}_{k+1, \theta_{\ell-1}}(s') - \hat{h}_{k, \theta_{\ell-1}}(s),\\
            &\quad\quad \text{if } k < K,\\
            &\hat{r}(\hat{a}) - \frac{1}{K} \hat{g}_{\theta_{\ell-1}} + \sum_{s' \in \mathcal{S}} \hat{P}(s'\vert s, \hat{a}) r_H(s') + \\
            &\quad \sum_{s',s'' \in \mathcal{S}} P^{\text{sys}}(s'' \vert s') \hat{P}(s' \vert s, \hat{a}) \hat{h}_{1, \theta_{\ell-1}}(s'') -\\
            &\quad \hat{h}_{k, \theta_{\ell-1}}(s),\\
            &\quad\quad \text{if } k = K,\\
        \end{cases}\\
        &\quad \forall s \in \mathcal{S},\ \hat{a} \in \hat{\mathcal{A}}.
    \end{align*}
    For every pair of state $s^{t,m}_k$ and atomic action $\hat{a}^{t,m}_k$ in the collected dataset $\mathrm{Data}_{\theta_{\ell-1}}^{M}$, we can compute its empirical estimates of the advantage function by using the empirical estimate $\bar{h}_{k,\psi_{\ell-1}}$ in place of $\hat{h}_{k}$:
    \begin{align} \label{eq:atomic-ppo-advantage}
        &\bar{A}_{k, \theta_{\ell-1}}(s^{t,m}_k, \hat{a}^{t,m}_k) = \notag\\
        &\begin{cases}
            \hat{r}(\hat{a}^{t,m}_k) - \frac{1}{K} \bar{g}_{\theta_{\ell-1}} + \\
            \quad \hat{h}_{k+1, \psi_{\ell-1}}(s^{t,m}_{k+1}) - \hat{h}_{k, \psi_{\ell-1}}(s^{t,m}_k), \\
            \quad\quad \text{if } k < K, \\
            \hat{r}(\hat{a}^{t,m}_k) - \frac{1}{K} \bar{g}_{\theta_{\ell-1}} + r_H(s^{t,m}_{k+1}) + \\
            \quad \hat{h}_{1, \psi_{\ell-1}}(s^{t+1,m}_{1}) - \hat{h}_{k, \psi_{\ell-1}}(s^{t,m}_k), \\
            \quad\quad \text{if } k = K, \\
        \end{cases}
    \end{align}
\end{enumerate}

Using the estimated advantage function $\bar{A}_{\theta_{\ell-1}}$, the Atomic-PPO algorithm computes the atomic action policy function $\tilde{\pi}^{}_{\theta_{\ell}}$ of the next iteration by choosing parameter $\theta_{\ell}$ that maximizes the clipped surrogate function introduced by \cite{schulman2017proximal}:
\begin{align} \label{eq:atomic-ppo-obj}
    &\bar{L}(\theta_{\ell}, \theta_{\ell-1}) := \frac{1}{M}\sum_{m, t, k} \notag\\
    &\min\left(\frac{\tilde{\pi}^{}_{\theta_{\ell}}(\hat{a}^{t, m}_{k} \vert s^{t, m}_{k})}{\tilde{\pi}^{}_{\theta_{\ell-1}}(\hat{a}^{t, m}_{k} \vert s^{t, m}_{k})} \bar{A}_{k, \theta_{\ell-1}}(s^{t, m}_{k}, \hat{a}^{t, m}_{k}), \right. \nonumber \\
    & \text{clip} \left(\frac{\tilde{\pi}^{}_{\theta_{\ell}}(\hat{a}^{t, m}_{k} \vert s^{t, m}_{k})}{\tilde{\pi}^{}_{\theta_{\ell-1}}(\hat{a}^{t, m}_{k} \vert s^{t, m}_{k})}, 1 - \epsilon, 1 + \epsilon \right) \cdot \notag\\
    &\quad \left. \bar{A}_{k, \theta_{\ell-1}}(s^{t, m}_{k}, \hat{a}^{t, m}_{k}) \right),
\end{align}
where $\epsilon \in (0, 1)$ is the clip size.
The policy iteration repeats until $N$ iterations are completed, and the final atomic policy $\tilde{\pi}_{\theta_N}$ is returned. The complete procedure is summarized in Algorithm~\ref{algo:atomic-ppo-vanilla}.

\section{Applications} \label{sec:applications}

In this section, we illustrate the versatility of our stochastic processing network model from Section~\ref{sec:model} by showing how it captures three important application domains: hospital inpatient overflow assignment, internet switch scheduling, and ride-hailing. These examples are drawn from prior studies (\cite{sun2024inpatient, Huo_Switch_Scheduling_via, dai2025atomicEV, feng2021scalable}), where Atomic-PPO has been applied and demonstrated strong performance. We summarize the findings of these studies to show concretely how these works map into our SPN model classes and how the atomic decomposition framework can be applied.


\subsection{Hospital Inpatient Overflow Assignment}
We adopt the hospital system model from \cite{sun2024inpatient} (illustrated in Figure 3 of their paper). The model considers $I$ patient classes (item classes) and $I$ pools (i.e. inpatient units). Patient arrivals occur at each discrete time $t \in [T]$ (e.g., hourly). Each arriving patient is placed on a wait list corresponding to their class until assigned to a unit. Each unit $i \in \mathcal{I} := \{1, \dots, I\}$ is specialized in treating patients of class $i$ and contains $K_i$ identical beds, resulting in a total of $K = \sum_{i \in \mathcal{I}} K_i$ beds across the system. Units can also admit patients from other classes at an associated {\em overflow cost}. Patients are discharged from the hospital once they have been served by the unit.

We model each undischarged patient as an {\em item} and each patient class as an {\em item class}. We model each bed as a {\em server}. The set of {\em service types} is defined as $\mathcal{J} := \{(i, i') : i, i' \in \mathcal{I} \}$, where $(i, i')$ denotes assigning a bed from unit $i$ to serve a patient of class $i'$. There are $J = I^2$ total service types.
To match our SPN formulation in Section \ref{sec:model}, we express all system costs as negative rewards: 
\begin{enumerate}
    \item[(i)] {\em Overflow cost}: Assigning a patient of class $i' \in \mathcal{I}$ to an inpatient unit $i \in \mathcal{I}$ (where $i \neq i'$) incurs a one-time reward $r_{(i,i')} < 0$.
    \item[(ii)] {\em Holding cost}: Patients remaining on the wait lists will incur a per unit-time cost (i.e. negative reward). 
    Given a post-decision state $s^{t_+} := (z^{t_+}, n^{t_+})$ at time $t$, we can obtain the number of patients of each class $i \in \mathcal{I}$ on wait lists by subtracting the total number of class $i$ patients currently being served in all inpatient units, given by $\sum_{i' \in \mathcal{I}} \sum_{\tau \in \mathbb{N}_+} n_{(i', i), \tau}^{t_+}$, from the total number of class $i$ patients in the system $z_i^{t_+}$.
\end{enumerate}

At each hour, the hospital manager selects a {\em schedule}, which is an action that assigns a service to every empty bed across all units. 
By modeling the assignment of each individual empty bed as an atomic action, we can reduce the action size from being exponential in the number of empty beds to being a constant.

Sun et al. tested Atomic-PPO on hospital systems with 5, 10, and 20 inpatient units and compared the Atomic-PPO to several benchmark policies. Their atomic action decomposition is similar to ours, with the main difference being how probability ratios and advantage functions are defined: their version uses the original actions, while ours uses atomic actions directly (see Equation~\eqref{eq:atomic-ppo-obj}).
In the case with 5 inpatient units, they compared Atomic-PPO to an approximate dynamic programming (ADP) method, which was a state-of-art algorithm for inpatient assignment. Atomic-PPO performed similarly to ADP (within 4\%, see Table 3 in \cite{sun2024inpatient}) but required much less training time, with approximately 2 hours per policy iteration compared to more than 10 hours for ADP.
For larger systems with 10 or 20 inpatient units, ADP becomes intractable because it requires to search over the combinatorially large action space. In these cases, Atomic-PPO was compared to three benchmark policies from \cite{dai2019inpatient}.
As shown in Figure 4 of \cite{sun2024inpatient}, Atomic-PPO outperformed all three benchmarks and reduced long-run average costs by about 25\% in both the 10-unit and 20-unit cases.

\subsection{Switch Scheduling}
Input-queued switch is a key component in high-speed communication networks that routes packets from multiple input ports to multiple output ports. We consider the input-queued switch model from \cite{Huo_Switch_Scheduling_via} (illustrated in Figure 3.3 of their paper), where a $W \times W$ crossbar switch connects $W$ input ports to $W$ output ports. Each input maintains $W$ virtual output queues (VOQs), one for each output, to buffer packets awaiting transmission. The VOQ at input $w'$ for output $w$ (denoted $\text{VOQ}_{w'w}$) holds packets of class $(w', w)$, representing those that arrive at input $w'$ and are destined for output $w$. Packets of each class arrive independently at every time step. Each input port can send a packet to at most one output port, and each output port can receive a packet from at most one input port. Transmissions take one time step, incur no cost or reward, and packets depart the system immediately upon transmission. Each queued packet incurs a holding cost of 1 per time step. At each time step, the system manager selects a one-to-one matching between input and output ports, determining which packets to transmit. The objective is to minimize the long-run average holding cost.

We map this system into the SPN model in Section \ref{sec:model} by representing each output port as a {\em server}, yielding a total of $K = W$ servers. Each packet is an {\em item}, and packets in $\text{VOQ}_{w'w}$ belong to {\em item class} $(w', w)$. The set of item classes is $\mathcal{I} := \{(w', w) : w', w \in [W]\}$, with $I = W^2$ total classes. Each {\em service type} specifies how a server (i.e., an output port) is used during a time step by defining the input–output port pair to be connected for packet transmission. 
A {\em schedule} (i.e., action) is a matching between input and output ports, with an action space of size $K!$. On the other hand, each atomic action only consists of assigning one output port to one input port, yielding an atomic action space of size $K^2$.

Huo et al. considered the internet switch scheduling problem with three different packet arrival patterns: the uniform arrival, the diagonal arrival, and the bottom-skewed arrival. They evaluated the performance of Atomic-PPO using 5, 6, 7, and 8 switches. In these cases, Atomic-PPO was compared to three benchmark policies: Max Weight, $d$-flip, and greedy. As shown in Table 3.4-3.6 of \cite{Huo_Switch_Scheduling_via}, Atomic-PPO significantly outperformed both the $d$-flip and greedy policies, and achieved performance comparable to the Max Weight policy. 

\subsection{Ride-Hailing}
We adopt the transportation network model from \cite{dai2025atomicEV}, which considers a set of regions $\mathcal{W}$ and a fleet of $K$ electric vehicles (EVs) serving passenger trip requests. Each vehicle has a range limit. In each region, there are chargers of various outlet rates. 
Trip requests from origin $w'$ to destination $w$ arrive at each minute $t \in [T]$ of the day. Each trip has a fixed duration and battery cost. 
At each minute, the system manager can assign vehicles to one of three tasks: (i) fulfilling passenger trips, (ii) repositioning to a different region, or (iii) charging. The system earns rewards from fulfilling trips and incurs costs from repositioning and charging. 


We model each vehicle as a {\em server} and each trip request as an {\em item}. Each {\em item class} corresponds to an origin-destination pair $(w', w)$. There are three {\em types of services}: (i) Trip-fulfillment services, which match vehicles (with specific location, battery, and availability) to trip requests; (ii) Repositioning services, which move idle vehicles to another region; (iii) Charging services, which recharge idle vehicles.
At each minute, the system manager selects a {\em schedule} (i.e., action) that assigns each of the $K$ vehicles to some service. By modeling the assignment of each individual vehicle as an atomic action, we reduce the action size from being exponential in the number of vehicles to being a constant.

Dai et al. study a ride-hailing system in the Manhattan area of New York City, where the service region is partitioned into 10 zones. The paper considers trip requests on workdays, with the fleet size set to 300 vehicles.
They train the fleet dispatching policy using Atomic-PPO, implemented as described in Section~\ref{sec:atomic-ppo}. Each policy iteration simulates 30 trajectories of 8 workdays and takes 15–20 minutes with parallel processing on 30 CPUs. The algorithm converges within 10 iterations, completing training in under 3 hours.
To evaluate performance, they derive a fluid-based upper bound and compare Atomic-PPO with two benchmarks: the power-of-$m$ policy, which assigns each trip to the highest-battery vehicle among the $m$ closest; and the fluid policy, which applies randomized rounding to the fluid LP solution. As shown in Table 4 of \cite{dai2025atomicEV}, Atomic-PPO achieves 91\% of the upper bound, significantly outperforming the power-of-$m$ policy (71\%) and the fluid policy (43\%).

Feng et al. \citep{feng2021scalable} study a ride-hailing system without incorporating the charging aspect, under a finite time horizon. They evaluate the performance of Atomic-PPO in two settings: (i) a five-region system with 1,000 vehicles over a 360-minute horizon, and (ii) a nine-region system with 2,000 vehicles over a 240-minute horizon. The benchmark is the {\em time-dependent lookahead policy} proposed in \cite{braverman2019empty}, a heuristic fleet control strategy derived from fluid-based analysis. In both settings, Atomic-PPO outperforms the benchmark by 2–3\%.

\section{Concluding remarks} \label{sec:conclusion}
In this article, we propose the atomic action decomposition framework, which enables the efficient computation of optimal control policies in stochastic processing networks with a large number of servers. By exploiting the reward decomposability property, which holds broadly across many applications, our approach reduces the action size from being exponential in the number of servers to a constant, without loss of optimality. Furthermore, we introduce Atomic-PPO, an efficient reinforcement learning algorithm that integrates this framework into the original widely used PPO algorithm. We demonstrate that our model can be applied to hospital inpatient overflow assignment, switch scheduling, and ride-hailing, where atomic action decomposition substantially reduces the complexity of policy training and Atomic-PPO achieves strong performance. One valuable direction of future research is to explore how atomic action decomposition can be adapted to systems with non-additive and more complex reward structures.



%
%
%
\clearpage

\appendix
\section{Proof of Theorem \ref{thm:atomic-policy-equivalence}} \label{sec:appendix-reduced-atomic}

Before proving Theorem~\ref{thm:atomic-policy-equivalence}, we first prove Lemma~\ref{lemma:atomic-mdp-equivalence}, which establishes that the passing-last atomic MDP is equivalent to the $K$-step atomic MDP (defined in Section~\ref{sec:atomic-decomp}) under deterministic step-independent atomic policies.



\begin{lemma} \label{lemma:atomic-mdp-equivalence}
    For any deterministic step-independent atomic policy $\tilde{\pi}$, the long-run average reward of the passing-last atomic MDP is equal to that with the $K$-step atomic MDP.
\end{lemma}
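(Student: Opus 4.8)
The plan is to establish a step-by-step correspondence between trajectories of the $K$-step atomic MDP and trajectories of the passing-last atomic MDP under a fixed deterministic step-independent policy $\tilde{\pi}$, and then argue that the per-time-step reward accumulated is identical along corresponding trajectories, so the two long-run average rewards coincide. The key structural fact, already observed just before the statement, is that once $\tilde{\pi}$ generates $\apass$ at some atomic step $k$ within a time step, the state is frozen ($s_{k+1} = s_k$) and, because $\tilde{\pi}$ is deterministic and state-dependent only, every subsequent atomic step $k+1, \dots, K$ also generates $\apass$ with zero atomic reward. Hence within each time step $t$ the sequence of atomic actions produced by $\tilde{\pi}$ in the $K$-step MDP consists of some number $m_t \le K$ of non-passing actions followed by $K - m_t$ copies of $\apass$, and the post-decision state $s^t_{K+1}$ reached after all $K$ atomic steps equals the state reached after just the first $m_t$ non-passing atomic steps. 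This is exactly the post-decision state at which the passing-last MDP applies the system transition $P^{\text{sys}}$.

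First I would make this precise: fix an initial state $s^0$, and couple the two chains so that they start from the same state and use the same realizations of the exogenous arrivals $x^t_i$ and service completions $y^t_{j,\tau}$. Proceeding inductively on $t$, assume the two chains are in the same state $s^t$ at the beginning of time step $t$. In the $K$-step MDP, run the $K$ atomic steps under $\tilde{\pi}$; let $m_t$ be the index at which the first $\apass$ occurs (set $m_t = K$ if no pass occurs, noting the last atomic action may or may not be $\apass$ — one must handle the edge case where all $K$ atomic actions are non-passing). By the freezing observation, the post-decision state is determined by the first $\min(m_t, K)$ non-passing atomic actions, and the accumulated atomic reward over the time step is $\sum_{k \le m_t} \hat{r}(\hat{a}^t_k)$ since passing contributes $0$. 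In the passing-last MDP, the identical sequence of non-passing atomic actions is generated (same states, same deterministic policy), the MDP halts the time step exactly when $\apass$ would first be chosen, reaches the same post-decision state, applies the same $P^{\text{sys}}$ with the coupled exogenous randomness, and collects the same holding cost $r_H(s^t_{K+1})$ and the same sum of atomic rewards. Therefore both chains transition to the identical state $s^{t+1}$, closing the induction, and the total reward $\sum_{k} \hat{r}(\hat{a}^t_k) + r_H(s^t_{K+1})$ collected during time step $t$ is the same in both MDPs.

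Given the pathwise (indeed state-by-state) identity of rewards under the coupling, the finite-horizon sums $\frac{1}{T}\sum_{t=1}^T (\sum_k \hat{r}(\hat{a}^t_k) + r_H(s^t_{K+1}))$ agree exactly for every $T$ along coupled trajectories, and since the coupling is measure-preserving on the driving randomness, the expectations agree; taking $T \to \infty$ gives equality of the long-run average rewards. I expect the main obstacle to be purely bookkeeping rather than conceptual: carefully specifying the state space and transition structure of the passing-last MDP so that "the time step ends when $\apass$ is generated" is a well-defined MDP (in particular, what happens when $\apass$ is the very first atomic action — then $P^{\text{sys}}$ is applied immediately from $s^t$ itself, which is consistent since \eqref{eq:atomic-item-transition}–\eqref{eq:atomic-service-transition-inf} with $\hat{a} = \apass$ leave the state unchanged), and confirming that the unichain/aperiodicity assumptions transfer so that the long-run average reward is well defined for $\tilde{\pi}$ in both MDPs. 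One should also note the restriction to \emph{deterministic} policies is essential here — for randomized policies the freezing argument fails, which is why the statement is phrased only for deterministic $\tilde{\pi}$, and this suffices since we ultimately only need a deterministic optimal step-independent policy.
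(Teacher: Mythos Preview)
Your proposal is correct and follows essentially the same approach as the paper's proof: both argue that under a deterministic step-independent $\tilde{\pi}$, once $\apass$ is generated the state freezes for the remainder of the $K$ atomic steps, so the post-decision state and the per-time-step reward in the $K$-step MDP coincide with those in the passing-last MDP, and hence the long-run average rewards agree. Your explicit coupling of the exogenous randomness and induction on $t$ is a slightly more detailed packaging of the same one-paragraph argument the paper gives; the edge cases you flag (first action is $\apass$, all $K$ actions non-passing) are handled implicitly in the paper via the observation that the number of non-passing actions is at most the number of idle servers, hence at most $K$.
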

\begin{proof}{Proof of Lemma \ref{lemma:atomic-mdp-equivalence}}
    Consider any state $s_1^t \in \mathcal{S}$ at time $t$, before any atomic actions have been generated. We roll out both atomic MDPs using policy $\tilde{\pi}$, and let $k \in [K]$ be the atomic step at which $\tilde{\pi}$ first generates the passing atomic action $\apass$. Under both atomic MDPs, the system generates the same atomic rewards and follows the identical transition dynamics given by \eqref{eq:atomic-item-transition}–\eqref{eq:atomic-service-transition-inf} for the first $k-1$ atomic steps. Then, under the passing-last atomic MDP, the system terminates the current time step, with the post-decision state being $s_k^t$. Under the $K$-step atomic MDP, the system obtains an atomic reward of $0$ and the state remains unchanged, i.e., $s_{k+1}^t = s_k^t$. Since $\tilde{\pi}$ is deterministic, it will generate $\apass$ at the atomic step $k+1$ and will continue generating $\apass$ until all $K$ atomic steps are complete, with the post-decision state being $s_{K+1}^t = s_k^t$.
    Therefore, $\tilde{\pi}$ induces the same post-decision state and total reward at every time step in both atomic MDPs, and consequently achieves the same long-run average reward.
    \hfill $\square$
\end{proof}

As a consequence of Lemma \ref{lemma:atomic-mdp-equivalence}, the long-run average reward $R(\tilde{\pi}^{})$ achieved by $\tilde{\pi}^{}$ can be equivalently represented as: \[
    R(\tilde{\pi}^{}) = \lim_{T \to \infty} \frac{1}{T} \mathbb{E}_{\tilde{\pi}^{}}\left[\sum_{t=1}^T \sum_{k=1}^{K^{t}} \hat{r}(\hat{a}_k^t) + r_H(s^t_{K^t+1}) \right],
\] where the random variable $K^t$ denotes the length of the sequence of {\em non-passing} atomic actions in time step $t$. We note that $K^t \leq K$, as each non-passing atomic action reduces the number of idling servers in the state space by one. Consequently, the number of non-passing atomic actions in any given time step is upper bounded by the total number of servers in the system.

\begin{proof}{Proof of Lemma \ref{lemma:atomic-h-exist}}

    For each $k = 0, \dots, K$, we use $\mathcal{S}[k]$ to denote the set of all states with $k$ idling servers. Therefore, we have $\cup_{k = 0}^K \mathcal{S}[k] = \mathcal{S}$. Moreover, for any $k,k' \in [K]$ with $k \neq k'$, we must have $\mathcal{S}[k] \cap \mathcal{S}[k'] = \varnothing$.
    
    We first consider the set of states in $\mathcal{S}[0]$, for which $\apass$ is the only feasible atomic action. Then, equation \eqref{eq:orig-atomic-policy-construct-h-body} indicates that $\tilde{h}^{*}(s) = r_H(s) -\hat{g}^* + \sum_{s' \in \mathcal{S}} P^{\text{sys}}(s' \vert s) h^*(s')$ for all $s \in \mathcal{S}[0]$. Now, consider the set of states in $\mathcal{S}[1]$, where we can assign at most one non-passing atomic action. For any such state $s \in \mathcal{S}[1]$, \eqref{eq:orig-atomic-policy-construct-h-body} indicates that $\tilde{h}^{*}$ can be computed by
    \begin{align*} 
        &\tilde{h}^{*}(s) = \\
        &\max_{\hat{a} \in \hat{\mathcal{A}}_{s}}\left\{ \left[r_H(s) -\hat{g}^* + \sum_{s' \in \mathcal{S}} P^{\text{sys}}(s' \vert s) h^*(s') \right] \mathds{1}_{\hat{a} = \apass} \right.\\ 
        &\quad+ \left. \left[\hat{r}(\hat{a}) - \sum_{s^{'} \in \mathcal{S}[0]} \hat{P}(s^{'} \vert s, \hat{a}) \tilde{h}^{*}(s^{'}) \right] \mathds{1}_{\hat{a} \neq \apass} \right\},
    \end{align*}
    where the non-passing action leads to a state $s^{'} \in \mathcal{S}[0]$ such that $\tilde{h}^{*}$ value is already computed. 
    
    We note that this procedure can be repeated for $k=2, 3, \dots, K$. 
    For state $s \in \mathcal{S}[k]$ that can be assigned at most $k$ non-passing atomic actions, \eqref{eq:orig-atomic-policy-construct-h-body} indicates that 
    \begin{align*} 
        &\tilde{h}^{*}(s) = \\
        &\max_{\hat{a} \in \hat{\mathcal{A}}_{s}}\left\{\left[r_H(s) -\hat{g}^* + \sum_{s' \in \mathcal{S}} P^{\text{sys}}(s' \vert s) h^*(s') \right]\mathds{1}_{\hat{a} = \apass} \right.\\
        &\quad+ \left. \left[\hat{r}(\hat{a}) - \sum_{s^{'} \in \mathcal{S}[k-1]} \hat{P}(s^{'} \vert s, \hat{a}) \tilde{h}^{*}(s^{'}) \right] \mathds{1}_{\hat{a} \neq \apass} \right\},
    \end{align*}
    where the non-passing action leads to a state in $\mathcal{S}[k-1]$, for which the corresponding $\tilde{h}^{*}$ value has already been computed. We repeat this process until $k = K$ and therefore we can obtain the $\tilde{h}^{*}$ value for all $s \in \mathcal{S}$.
    \hfill $\square$
\end{proof}

\begin{proof}{Proof of Lemma \ref{lemma:opt-eq-atomic}}
    Now, we are going to show that the constructed atomic policy $\tilde{\pi}^{*}$ in \eqref{eq:orig-atomic-policy-construct-pi} achieves the same long-run average reward as $\pi^*$ by first proving that $\tilde{h}^{*}(s) = h^*(s)$ for all $s \in \mathcal{S}$.
    Given any state $s \in \mathcal{S}$, let $(\hat{a}_k)_{k = 1}^{\tilde{K}^{\dagger}+1}$ be the sequence of atomic actions generated by $\tilde{\pi}^{*}$, where $\hat{a}_{\tilde{K}^{\dagger}+1} = \apass$ for some $\tilde{K}^{\dagger} \leq K$. Let $(s_k)_{k=1}^{\tilde{K}^{\dagger}+1}$ be the corresponding sequence of states induced by these atomic actions. 
    Let $a^{\dagger}$ be the action that is {\em equivalent} to $(\hat{a}^{\dagger}_k)_{k = 1}^{\tilde{K}^{\dagger}+1}$ in that $a^{\dagger}_{j'j} = \sum_{k = 1}^{\tilde{K}^{\dagger}} \hat{a}^{\dagger}_{j'j},\  \forall j,j' \in \mathcal{J}$. We first argue that $h^*(s) \geq \tilde{h}^{*}(s)$:
    \begin{subequations} \label{eq:h-hat-leq-h}
        \begin{align}
            h^{*}(s) =& \max_{a \in \mathcal{A}_{s}}\left\{r(s, a) -\hat{g}^* + \sum_{s' \in \mathcal{S}} P(s' \vert s, a) h^{*}(s') \right\} \notag \\
            \geq& r(s, a^{\dagger}) -\hat{g}^* + \sum_{s' \in \mathcal{S}} P(s' \vert s, a^{\dagger}) h^{*}(s') \nonumber\\
            =& \sum_{k = 1}^{\tilde{K}^{\dagger}} \hat{r}(\hat{a}^{\dagger}_k) + r_H(s_{\tilde{K}^{\dagger}+1}) -\hat{g}^* \notag \\
            &\quad+ \sum_{s' \in \mathcal{S}} P^{\text{sys}}(s' \vert s_{\tilde{K}^{\dagger}+1}) \cdot \notag\\
            &\quad \hat{P}(s_{\tilde{K}^{\dagger}+1} \vert s_{\tilde{K}^{\dagger}}, \hat{a}^{\dagger}_{\tilde{K}^{\dagger}}) \cdots \hat{P}(s_2 \vert s_1, \hat{a}^{\dagger}_1) h^{*}(s') \label{eq:opt-eq-atomic-proof-left-a}\\
            =& \sum_{k = 1}^{\tilde{K}^{\dagger}} \hat{r}(\hat{a}_k^{\dagger}) + r_H(s_{\tilde{K}^{\dagger}+1}) -\hat{g}^* \notag \\
            &\quad+ \sum_{s' \in \mathcal{S}} P^{\text{sys}}(s' \vert s_{\tilde{K}^{\dagger}+1}) h^{*}(s') \label{eq:opt-eq-atomic-proof-left-b} \\
            =& \tilde{h}^{*}(s), \label{eq:opt-eq-atomic-proof-left-c}
        \end{align}
    \end{subequations}
    where \eqref{eq:opt-eq-atomic-proof-left-a} is obtained due to the fact that $a^{\dagger}$ and $(\hat{a}^{\dagger}_k)_{k=1}^{\tilde{K}^{\dagger}+1}$ are equivalent; \eqref{eq:opt-eq-atomic-proof-left-b} is because the state transition given non-passing action is deterministic, i.e. $\hat{P}(s_{\tilde{K}^{\dagger}+1} \vert s_{\tilde{K}^{\dagger}}, \hat{a}^{\dagger}_{\tilde{K}^{\dagger}}) = \dots = \hat{P}(s_2 \vert s_1, \hat{a}^{\dagger}_1) = 1$; and \eqref{eq:opt-eq-atomic-proof-left-c} is obtained by applying \eqref{eq:orig-atomic-policy-construct-h-body} recursively for $\tilde{K}^{\dagger}+1$ steps using atomic actions generated by $\tilde{\pi}^{*}$ that satisfies \eqref{eq:orig-atomic-policy-construct-pi}. 
    
    We next show that $h^*(s) \leq \tilde{h}^{*}(s)$. Let $a^{*}$ be the action returned by $\pi^*$ at state $s$. Let $(\hat{a}^{*}_{k})_{k=1}^{\tilde{K}^{*}+1}$ be the sequence of atomic actions that is equivalent to $a^{*}$, where $\hat{a}^{*}_{\tilde{K}^{*}+1} = \apass$ for some $\tilde{K}^{*} \leq K$. We then obtain 
    \begin{subequations} \label{eq:h-hat-geq-h}
        \begin{align} 
            \tilde{h}^{*}(s) \geq& \sum_{k = 1}^{\tilde{K}^{*}} \hat{r}(\hat{a}^{*}_k) + r_H(s_{\tilde{K}^{*}+1}) -\hat{g}^* \notag\\
            &\quad+ \sum_{s^{*} \in \mathcal{S}} P^{\text{sys}}(s' \vert s_{\tilde{K}^{*}+1}) h^{*}(s') \label{eq:h-hat-geq-h-a}\\
            =& \sum_{k = 1}^{\tilde{K}^{'}} \hat{r}(\hat{a}^{*}_k) + r_H(s_{\tilde{K}^{*} + 1}) -\hat{g}^* \notag\\
            &\quad+ \sum_{s' \in \mathcal{S}} P^{\text{sys}}(s' \vert s_{\tilde{K}^{*} + 1}) \cdot \notag\\
            &\quad \hat{P}(s_{\tilde{K}^{*}+1} \vert s_{\tilde{K}^{*}}, \hat{a}^{*}_{\tilde{K}^{*}}) \cdots \hat{P}(s_2 \vert s_1, \hat{a}^{*}_1) h^{*}(s') \label{eq:h-hat-geq-h-b}\\
            =& r(s, a^{*}) -\hat{g}^* + \sum_{s' \in \mathcal{S}} P(s' \vert s_1, a^{*}) h^{*}(s') \label{eq:h-hat-geq-h-c}\\
            =& h^*(s), \label{eq:h-hat-geq-h-d}
        \end{align}
    \end{subequations}
    where \eqref{eq:h-hat-geq-h-a} is obtained by applying \eqref{eq:orig-atomic-policy-construct-h-body} recursively for $\tilde{K}^{*}+1$ steps using atomic actions generated by $\tilde{\pi}^{*}$; \eqref{eq:h-hat-geq-h-b} is because the state transitions given non-passing atomic actions are deterministic, i.e. $\hat{P}(s_{\tilde{K}^{*}+1} \vert s_{\tilde{K}^{*}}, \hat{a}^{*}_{\tilde{K}^{*}}) = \dots = \hat{P}(s_2 \vert s_1, \hat{a}^{*}_1) = 1$; \eqref{eq:h-hat-geq-h-c} is due to the fact that $a^{*}$ is equivalent to $(\hat{a}_1^{*}, \dots, \hat{a}_{\tilde{K}^{*}+1}^{*})$; and \eqref{eq:h-hat-geq-h-d} is due to the fact that $a^{*}$ is generated by the optimal policy $\pi^*$ that satisfies \eqref{eq:orig-atomic-policy-construct-pi-aug}.

    Using \eqref{eq:h-hat-leq-h} and \eqref{eq:h-hat-geq-h}, we obtain $\tilde{h}^{*}(s) = h^*(s)$ for all state $s$. Hence, we obtain that:
    \begin{align} 
        &\tilde{h}^{*}(s) = \notag\\
        &\max_{\hat{a} \in \hat{\mathcal{A}}_{s}} \left\{ \left[r_H(s) -\hat{g}^* + \sum_{s' \in \mathcal{S}} P^{\text{sys}}(s' \vert s) \tilde{h}^{*}(s') \right]\mathds{1}_{\hat{a} = \apass} \right. \notag\\
        &\quad+ \left. \left[\hat{r}(\hat{a}) + \sum_{s^{'} \in \mathcal{S}} \hat{P}(s^{'} \vert s, \hat{a}) \tilde{h}^{*}(s^{'}) \right] \mathds{1}_{\hat{a} \neq \apass}\right\}, \quad \forall s \in \mathcal{S}. \label{eq:orig-atomic-policy-construct-h-body-new}
    \end{align}
    
    Now we are ready to show that $R(\tilde{\pi}^{*}) = R^*$. Consider any state $s \in \mathcal{S}$. Let $\hat{a}_1, \dots, \hat{a}_{\tilde{K}^{*}+1}$ be the sequence of atomic actions generated by $\tilde{\pi}^{*}$ at the current time step, where $\hat{a}_{\tilde{K}^{*}+1} = \apass$ and $\tilde{K}^{*} \leq K$. Let $s_1, \dots, s_{\tilde{K}^{*}+1}$ be the corresponding sequence of states induced by these atomic actions. We can obtain 
    \begin{align*}
        \tilde{h}^{*}(s) =& \sum_{k = 1}^{\tilde{K}^{*}} \hat{r}(\hat{a}^{}_k) + r_H(s_{\tilde{K}^{*}+1}) - \hat{g}^* \\
        &\quad+ \sum_{s' \in \mathcal{S}} P^{\text{sys}}(s' \vert s_{\tilde{K}^{*}+1}) \tilde{h}^{*}(s'),
    \end{align*}
    Let $s \in \mathcal{S}$ be the initial state. Repeating the equation above for every time step $t = 1, \dots, T$, we obtain 
    \begin{align*}
        \tilde{h}^{*}(s) =& \mathbb{E}_{\tilde{\pi}^{*}} \left[\sum_{t = 1}^{T} \sum_{k = 1}^{K^{t}} \hat{r}(\hat{a}^{t}_k) + r_H(s^t_{K^t+1}) \right] \\
        &\quad- T \hat{g}^* + \mathbb{E}_{\tilde{\pi}^{*}} \left[\tilde{h}^{*}(s^{T+1}) \Bigg\lvert s \right].
    \end{align*}

    We divide both sides of the equation by $T$ and take the limit $T \rightarrow \infty$. Since $h^*$ is bounded, we have \[
        \hat{g}^* = \lim_{T \rightarrow \infty} \frac{1}{T} \mathbb{E}_{\tilde{\pi}^{*}} \left[\sum_{t = 1}^{T} \sum_{k = 1}^{K^{t}} \hat{r}(\hat{a}^{t}_k) + r_H(s^t_{K^t+1}) \right] = R(\tilde{\pi}^{*}).
    \]
    
    Recall from Theorem \ref{thm:atomic-optimal-aug} that $\hat{g}^* = R(\pi^*)$. Hence, we obtain that $R(\tilde{\pi}^{*}) = R(\pi^*)$, where $\tilde{\pi}^{*}$ is constructed by \eqref{eq:orig-atomic-policy-construct-pi}. 
    \hfill $\square$
\end{proof}

\section{Relative Value Function of Atomic Action Policies} \label{sec:appendix-h-def}
Consider any atomic action policy $\hat{\pi}:= \{\hat{\pi}_k\}_{k \in [K]}$. We use $\hat{h}_{k,\hat{\pi}}: \mathcal{S} \rightarrow \mathbb{R}$ to denote the relative value function of $\hat{\pi}$ at each atomic step $k \in [K]$. 
In Proposition \ref{prop:h-def-well-defined}, we demonstrate that the infinite series in \eqref{eq:h-def-standard-limit-main} is well defined. Then, in Proposition \ref{prop:h-def-equivalence}, we show that our definition of relative value function is equivalent to the relative value function defined using Cesaro limit, as given on page 338 of \cite{PutermanMDP} for periodic chains, up to an additive constant. 

\begin{proposition} \label{prop:h-def-well-defined}
    The function $\hat{h}_{k,\hat{\pi}}(s)$ given in \eqref{eq:h-def-standard-limit-main} is well defined for all $s \in \mathcal{S}$.
\end{proposition}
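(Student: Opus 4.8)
The plan is as follows. The first bracketed term in \eqref{eq:h-def-standard-limit-main} is a single expectation over the finitely many atomic actions and states reachable within one time step, hence it is a finite real number; so the real content of the proposition is the absolute convergence of the infinite series $\sum_{t=2}^{\infty} \big( \mathbb{E}_{\hat{\pi}}[ \sum_{\ell=1}^{K} \hat{r}(\hat{a}^t_\ell) + r_H(s^t_{K+1}) \mid s^1_k = s ] - \hat{g}_{\hat{\pi}} \big)$, where $\hat{g}_{\hat{\pi}} = R(\hat{\pi})$ is the long-run average reward, which exists by the unichain assumption. I would prove convergence by showing that each summand decays geometrically in $t$.

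First I would introduce the \emph{macro chain}: under a fixed step-dependent atomic policy $\hat{\pi} = \{\hat{\pi}_k\}_{k \in [K]}$, the states $(s^t_1)_{t \ge 1}$ observed at the start of each time step form a time-homogeneous Markov chain on $\mathcal{S}$ (homogeneity holds because within a time step the $K$ atomic decisions are governed by the fixed maps $\hat{\pi}_1,\dots,\hat{\pi}_K$). By the equivalence between atomic-action sequences and joint actions noted in Section~\ref{sec:atomic-decomp}, this macro chain is exactly the original MDP run under the (possibly randomized) joint policy that $\hat{\pi}$ induces, and is therefore aperiodic unichain by the standing assumption on the original MDP. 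Since $\mathcal{S}$ is finite, it mixes geometrically: there are constants $\kappa > 0$ and $\beta \in (0,1)$ such that $\lvert P^{m}(s' \mid s_0) - \rho(s') \rvert \le \kappa \beta^{m}$ for all $m \ge 0$ and all $s_0, s' \in \mathcal{S}$, where $P^m$ is the $m$-step kernel of the macro chain and $\rho$ its unique stationary distribution (see \cite{PutermanMDP}, Ch.~8).

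Next I would set $f(s_0) := \mathbb{E}_{\hat{\pi}}[ \sum_{\ell=1}^{K} \hat{r}(\hat{a}^1_\ell) + r_H(s^1_{K+1}) \mid s^1_1 = s_0 ]$, the expected reward accrued over one full time step started from $s_0$; finiteness of $\mathcal{S}$ and $\hat{\mathcal{A}}$ gives a bound $\lvert f \rvert \le C$, and the standard identification of the long-run average reward with the stationary mean for finite unichain MDPs gives $\hat{g}_{\hat{\pi}} = \sum_{s'} \rho(s') f(s')$. Conditioning on $s^1_k = s$, after the leftover partial first time step the state $s^2_1$ has some fixed law $\nu_s$ on $\mathcal{S}$, and for every $t \ge 2$,
\[
\mathbb{E}_{\hat{\pi}}\Big[ \sum_{\ell=1}^{K} \hat{r}(\hat{a}^t_\ell) + r_H(s^t_{K+1}) \,\Big|\, s^1_k = s \Big] - \hat{g}_{\hat{\pi}} \;=\; \sum_{s_0 \in \mathcal{S}} \sum_{s' \in \mathcal{S}} \nu_s(s_0)\big( P^{t-2}(s' \mid s_0) - \rho(s') \big) f(s'),
\]
whose absolute value is at most $\lvert \mathcal{S} \rvert\, C\, \kappa\, \beta^{t-2}$. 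Hence the series is dominated by a convergent geometric series and converges absolutely, so $\hat{h}_{k,\hat{\pi}}(s)$ is a well-defined real number for every $k \in [K]$ and $s \in \mathcal{S}$.

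The main obstacle is the reduction to the macro chain: one must verify that $(s^t_1)_t$ is genuinely a time-homogeneous Markov chain under a \emph{step-dependent} (not merely step-independent) policy, and that, through the atomic-to-joint action correspondence, it inherits the aperiodic unichain property — and hence geometric mixing — from the original MDP. Once that correspondence is pinned down, the remaining pieces (boundedness of $f$, the stationary-mean formula for $\hat{g}_{\hat{\pi}}$, and the handling of the single partial time step that begins at atomic step $k$ rather than $1$) are routine finite-state estimates.
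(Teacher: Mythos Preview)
Your argument is correct, and in fact cleaner than the paper's for this particular proposition. Both proofs rest on the same engine: the chain $(s^t_1)_{t\ge 1}$ is, via the atomic-to-joint correspondence, the original MDP run under a stationary (randomized) policy, hence an aperiodic finite unichain, hence geometrically ergodic (the paper invokes Perron--Frobenius for the same purpose). Where you aggregate the one-step reward into a single bounded function $f$ and bound each summand directly by $|\mathcal{S}|\,C\,\kappa\,\beta^{t-2}$, the paper instead decomposes the time-$t$ contribution across atomic steps, introduces per-step long-run averages $\hat g_{\ell,\hat\pi}$, proves the identity $\hat g_{\hat\pi}=\sum_{\ell=1}^{K}\hat g_{\ell,\hat\pi}$, shows each per-step series $\sum_t \mathbb{E}[\hat r(\hat a^t_\ell)-\hat g_{\ell,\hat\pi}]$ converges separately, and then reassembles. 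Your route is shorter for Proposition~\ref{prop:h-def-well-defined} alone; the paper's more granular decomposition is not wasted effort, however, since the per-step quantities $\hat g_{\ell,\hat\pi}$ and the identity $\hat g_{\hat\pi}=\sum_\ell \hat g_{\ell,\hat\pi}$ are exactly what is needed in the proof of Proposition~\ref{prop:h-def-equivalence} to match the Ces\`aro-limit definition. The ``main obstacle'' you flag---checking that the macro chain is genuinely the original MDP under a stationary policy and so inherits the aperiodic-unichain assumption---is the right place to be careful, and the paper's own proof leans on the same identification (writing the macro transition as $(\hat P_{\hat\pi})^K P^{\text{sys}}$ and citing the unichain assumption directly).
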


\begin{proposition} \label{prop:h-def-equivalence}
    For any atomic policy $\hat{\pi}$, there exists a scalar $\kappa_{\hat{\pi}}$ such that the following equation holds:
    \begin{align}
        &\hat{h}_{k,\hat{\pi}}(s) + \kappa_{\hat{\pi}} \notag\\
        =& \mathbb{E}_{\hat{\pi}_{}}\left[\sum_{\ell = k }^K \left( \hat{r}(\hat{a}_{\ell}^1) - \frac{1}{K} \hat{g}_{\hat{\pi}} \right) + r_H(s_{K+1}^1) \Bigg\lvert s^1_k = s \right] \notag \\
        &+ \lim_{D \rightarrow \infty} \frac{1}{DK} \sum_{T = 1}^D \sum_{q = 1}^{K} \notag\\
        &\quad \mathbb{E}_{\hat{\pi}_{}}\left[\sum_{t = 2}^{T-1} \sum_{\ell = 1}^K \left( \hat{r}(\hat{a}_{\ell}^t) - \frac{1}{K} \hat{g}_{\hat{\pi}} \right) + r_H(s_{K+1}^t) \right. \notag\\
        &\quad+ \left. \sum_{\ell = 1}^{q} \left( \hat{r}(\hat{a}_{\ell}^{T}) - \frac{1}{K} \hat{g}_{\hat{\pi}} \right) + r_H(s_{K+1}^T) \Bigg\lvert s^1_k = s \right],\notag\\
        &\quad \forall s \in \mathcal{S}, \label{eq:h-def-cesaro-limit}
    \end{align}
    where the right-hand side of \eqref{eq:h-def-cesaro-limit} is the relative value function of $\hat{\pi}$ defined using the Cesaro limit as given by \cite{PutermanMDP} for periodic chains (see page 338 of \cite{PutermanMDP}).
\end{proposition}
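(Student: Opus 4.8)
The plan is to match the two expressions term by term, reducing the claimed identity to two applications of the Cesàro mean theorem together with the convergence of the distribution of the time-step-boundary state. Introduce the shorthand $\bar{\rho}^{t} := \sum_{\ell=1}^{K}\bigl(\hat{r}(\hat{a}_{\ell}^{t}) - \tfrac{1}{K}\hat{g}_{\hat{\pi}}\bigr) + r_H(s_{K+1}^{t})$ for the centered per-time-step reward and, for $q \in [K]$, $\bar{\rho}^{t}_{q} := \sum_{\ell=1}^{q}\bigl(\hat{r}(\hat{a}_{\ell}^{t}) - \tfrac{1}{K}\hat{g}_{\hat{\pi}}\bigr) + r_H(s_{K+1}^{t})$ for its truncation through atomic step $q$, so that $\bar{\rho}^{t}_{K} = \bar{\rho}^{t}$. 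The leading term $\mathbb{E}_{\hat{\pi}}\bigl[\sum_{\ell=k}^{K}(\hat{r}(\hat{a}_{\ell}^{1}) - \tfrac{1}{K}\hat{g}_{\hat{\pi}}) + r_H(s_{K+1}^{1}) \mid s_k^1 = s\bigr]$ is common to $\hat{h}_{k,\hat{\pi}}(s)$ in \eqref{eq:h-def-standard-limit-main} and to the right-hand side of \eqref{eq:h-def-cesaro-limit}, so it suffices to show that the double Cesàro average on the right equals $\sum_{t=2}^{\infty}\mathbb{E}_{\hat{\pi}}[\bar{\rho}^{t}\mid s_k^1=s] + \kappa_{\hat{\pi}}$ for a constant $\kappa_{\hat{\pi}}$ independent of $s$ and $k$; note that the $T=1,2$ contributions to that average are $O(1/D)$ and hence drop out in the limit.

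First I would split the double average by linearity into a bulk part $\tfrac{1}{D}\sum_{T=1}^{D} A_T$ with $A_T := \sum_{t=2}^{T-1}\mathbb{E}_{\hat{\pi}}[\bar{\rho}^{t}\mid s_k^1=s]$, and a boundary part $\tfrac{1}{K}\sum_{q=1}^{K}\bigl(\tfrac{1}{D}\sum_{T=1}^{D}\mathbb{E}_{\hat{\pi}}[\bar{\rho}^{T}_{q}\mid s_k^1=s]\bigr)$. For the bulk part, Proposition~\ref{prop:h-def-well-defined} guarantees that $A_T$ converges as $T \to \infty$ to $L := \sum_{t=2}^{\infty}\mathbb{E}_{\hat{\pi}}[\bar{\rho}^{t}\mid s_k^1=s]$, which is precisely the second term of $\hat{h}_{k,\hat{\pi}}(s)$, so the Cesàro mean theorem gives $\tfrac{1}{D}\sum_{T=1}^{D}A_T \to L$. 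For the boundary part, the key observation is that the embedded chain $(s_1^1, s_1^2, s_1^3, \dots)$ at time-step boundaries is precisely the original MDP run under the flattened policy induced by $\hat{\pi}$ — one step of the original MDP equals $K$ deterministic atomic transitions followed by $P^{\text{sys}}$ — which by the standing aperiodic-unichain assumption has a limiting distribution $\hat{\rho}_1$ independent of the starting state. Writing $\phi_q(s') := \mathbb{E}_{\hat{\pi}}[\bar{\rho}^{T}_{q}\mid s_1^T = s']$, a bounded function of $s'$ that does not depend on $T$, we get $\mathbb{E}_{\hat{\pi}}[\bar{\rho}^{T}_{q}\mid s_k^1=s] = \mathbb{E}_{\hat{\pi}}[\phi_q(s_1^T)\mid s_k^1=s] \to c_q := \mathbb{E}_{\hat{\rho}_1}[\phi_q]$, and the Cesàro mean theorem once more gives $\tfrac{1}{D}\sum_{T=1}^{D}\mathbb{E}_{\hat{\pi}}[\bar{\rho}^{T}_{q}\mid s_k^1=s] \to c_q$. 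Hence the boundary part tends to $\kappa_{\hat{\pi}} := \tfrac{1}{K}\sum_{q=1}^{K} c_q$, which is manifestly independent of $s$ and $k$. Adding the two limits and restoring the common leading term yields that the right-hand side of \eqref{eq:h-def-cesaro-limit} equals $\hat{h}_{k,\hat{\pi}}(s) + \kappa_{\hat{\pi}}$.

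The step I expect to be the main obstacle is the convergence of the boundary marginals $\Pr_{\hat{\pi}}(s_1^T = \cdot \mid s_k^1 = s)$, since the full atomic MDP is periodic with period $K$ and its marginals need not converge. The unblocking idea is to notice that the states $s_1^1, s_1^2, \dots$ all occupy the same phase of the period-$K$ cycle, so the boundary process is genuinely aperiodic, and its kernel is exactly the original MDP's transition kernel under the flattened policy; this lets us invoke the paper's aperiodic-unichain hypothesis directly instead of re-deriving ergodicity for the refined chain. Everything else is routine: $\mathcal{S}$ is finite and the rewards are bounded, so all partial sums, expectations, and Cesàro averages are finite and the interchanges above are automatically justified. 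One could instead phrase the argument through the gain and bias of the $K$-step transition matrix as in Chapter~8 of \cite{PutermanMDP}, but the direct Cesàro computation seems more transparent.
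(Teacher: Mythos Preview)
Your proposal is correct and takes a genuinely different route from the paper's proof. The paper works from the infinite-series side: it splits $\hat{h}_{k,\hat{\pi}}(s)$ into the $K$ per-atomic-step series established in the proof of Proposition~\ref{prop:h-def-well-defined}, converts each convergent series to its Ces\`aro form, and then carries out an explicit algebraic re-indexing of the resulting double sum to match the shape of the right-hand side of \eqref{eq:h-def-cesaro-limit}; the constant falls out of the residual edge terms and is computed explicitly as $\kappa_{\hat{\pi}} = -\tfrac{1}{K}\sum_{\ell=1}^{K}(\ell-1)\bigl(\hat{g}_{\ell,\hat{\pi}} - \tfrac{1}{K}\hat{g}_{\hat{\pi}}\bigr)$ in terms of the per-step stationary gains $\hat{g}_{\ell,\hat{\pi}}$. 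You instead attack the Ces\`aro side directly, splitting it into a bulk part whose Ces\`aro average of the partial sums $A_T$ converges to $L$ by Proposition~\ref{prop:h-def-well-defined}, and a boundary part handled by observing that the embedded chain $(s_1^T)_{T\geq 2}$ is precisely the original aperiodic-unichain MDP and hence has converging marginals. Your decomposition is more conceptual and avoids the paper's index gymnastics entirely; the trade-off is that your constant $\kappa_{\hat{\pi}} = \tfrac{1}{K}\sum_{q=1}^K c_q$ is less explicit, which is immaterial since the proposition only asserts existence.
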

\begin{proof}{Proof of Proposition \ref{prop:h-def-well-defined}}
    We consider the Markov chain formed by observing the system state at the $\ell$-th atomic step of each time step, for any $\ell \in [K]$. The transition matrix of this chain under policy $\hat{\pi}$, denoted by $\hat{P}_{\ell, \hat{\pi}}$, is given by
    \begin{align*}
        \hat{P}_{\ell, \hat{\pi}} := (\hat{P}_{\hat{\pi}})^{K+1-\ell} \cdot P^{\text{sys}} \cdot (\hat{P}_{\hat{\pi}})^{\ell-1},
    \end{align*}
    where $\hat{P}_{\hat{\pi}}$ is the transition matrix for a single atomic step under policy $\hat{\pi}$. Let $P_{\hat{\pi}} := (\hat{P}_{\hat{\pi}})^K P^{\text{sys}}$ be the state transition matrix induced by $\hat{\pi}$ in the original MDP. Since the original MDP is an aperiodic unichain, the state space is finite and the chain is irreducible, it follows from Theorem 1.8.3 (page 41 of \cite{norris1998markov}) that the matrix $((\hat{P}_{\hat{\pi}})^K P^{\text{sys}})^t = P^t_{\hat{\pi}}$ converges to a stochastic matrix $P^*_{\hat{\pi}}$ as $t \rightarrow \infty$. Each row of $P^*_{\hat{\pi}}$ equals the stationary distribution $\hat{\rho}_{1, \hat{\pi}}$ of the original MDP, i.e. $P^*_{\hat{\pi}} = e \hat{\rho}_{1, \hat{\pi}}^T$, where $e$ is the column vector of all ones. Therefore, for each atomic step $\ell \in [K]$, the stationary distribution of the Markov chain formed by observing the state at $\ell$-th atomic step of each time step is given by \[
        \hat{\rho}_{\ell, \hat{\pi}}^T := \hat{\rho}_{1, \hat{\pi}}^T  (\hat{P}_{\hat{\pi}})^{\ell-1}.
    \] The long-run average reward $\hat{g}_{\ell, \hat{\pi}}$ achieved by $\hat{\pi}$ at atomic step $\ell \in [K]$ across all times is given by 
    \begin{align*}
        \hat{g}_{\ell, \hat{\pi}} = 
        \begin{cases}
            \sum_{s \in \mathcal{S}} \hat{\rho}_{\ell, \hat{\pi}}(s) \sum_{\hat{a} \in \hat{\mathcal{A}}} \hat{\pi}_{\ell, \hat{\pi}}(\hat{a} \vert s) \hat{r}(\hat{a}), \\
            \quad\text{if } \ell \leq K-1,\\
            \sum_{s \in \mathcal{S}} \hat{\rho}_{\ell, \hat{\pi}}(s) \sum_{\hat{a} \in \hat{\mathcal{A}}} \hat{\pi}_{\ell, \hat{\pi}}(\hat{a} \vert s) \hat{r}(\hat{a}) +\\
            \quad \sum_{s, s' \in \mathcal{S}} \hat{\rho}_{\ell, \hat{\pi}}(s) \hat{P}_{\hat{\pi}}(s' \vert s) r_H(s'), \\
            \quad\text{if } \ell = K.\\
        \end{cases}
    \end{align*}
    Next, we are going to show that the long-run average reward $\hat{g}_{\hat{\pi}}$ achieved by $\hat{\pi}$ in each time step satisfies 
    \begin{align} \label{eq:g-hat-relations}
        \hat{g}_{\hat{\pi}} = \sum_{\ell = 1}^{K} \hat{g}_{\ell, \hat{\pi}}.
    \end{align}
    We note that
    \begin{align*}
        \hat{g}_{\hat{\pi}} =& \lim_{T \rightarrow \infty} \frac{1}{T} \\
        &\quad \mathbb{E}_{\hat{\pi}}\left[\sum_{t = 1}^T \left(\sum_{\ell = 1}^K \hat{r}(\hat{a}^t_{\ell}) \right) + r_H(s^t_{K+1}) \Bigg\lvert s^1_1 = s \right]\\
        =& \lim_{T \rightarrow \infty} \frac{1}{T} \\
        &\quad \left\{\sum_{\ell = 1}^{K} \sum_{t = 1}^T \sum_{s' \in \mathcal{S}} \left[(\hat{P}_{\hat{\pi}}^KP^{\text{sys}})^{t-1} (\hat{P}_{\hat{\pi}})^{\ell-1} \right](s' \vert s) \right. \cdot\\
        &\quad\quad \left. \sum_{\hat{a} \in \hat{\mathcal{A}}} \hat{\pi}_{\ell}(\hat{a} \vert s') \hat{r}(\hat{a}) \right.\\
        &\quad+ \left. \sum_{t = 1}^T \sum_{s' \in \mathcal{S}} \left[(\hat{P}_{\hat{\pi}}^KP^{\text{sys}})^{t-1} (\hat{P}_{\hat{\pi}})^{K} \right](s' \vert s) r_H(s') \right\}\\
        =& \sum_{\ell = 1}^K \sum_{s' \in \mathcal{S}} \left\{\lim_{T \rightarrow \infty} \frac{1}{T} \sum_{t = 1}^T \left[(\hat{P}_{\hat{\pi}}^K P^{\text{sys}})^{t-1} \right] (\hat{P}_{\hat{\pi}})^{\ell-1}\right\}(s' \vert s) \cdot\\
        &\quad\quad \sum_{\hat{a} \in \hat{\mathcal{A}}} \hat{\pi}_{\ell}(\hat{a} \vert s') \hat{r}(\hat{a})\\
        &\quad+ \sum_{s' \in \mathcal{S}} \left\{\lim_{T \rightarrow \infty} \frac{1}{T} \sum_{t = 1}^T \left[(\hat{P}_{\hat{\pi}}^K P^{\text{sys}})^{t-1} \right] (\hat{P}_{\hat{\pi}})^{K}\right\}(s' \vert s) \cdot\\
        &\quad\quad r_H(s')\\
        \stackrel{(a)}{=}& \sum_{\ell = 1}^K \sum_{s' \in \mathcal{S}} \left[ P^*_{\hat{\pi}} (\hat{P}_{\hat{\pi}})^{\ell-1}\right](s' \vert s) \sum_{\hat{a} \in \hat{\mathcal{A}}} \hat{\pi}_{\ell}(\hat{a} \vert s') \hat{r}(\hat{a}) \\
        &\quad+ \sum_{s' \in \mathcal{S}} \left[ P^*_{\hat{\pi}} (\hat{P}_{\hat{\pi}})^{K}\right](s' \vert s) r_H(s')\\
        =& \sum_{\ell = 1}^K \sum_{s' \in \mathcal{S}} \hat{\rho}_{\ell, \hat{\pi}}(s') \sum_{\hat{a} \in \hat{\mathcal{A}}} \hat{\pi}_{\ell}(\hat{a} \vert s') \hat{r}(\hat{a}) \\
        &\quad+ \sum_{s', s'' \in \mathcal{S}} \hat{\rho}_{K, \hat{\pi}}(s') \hat{P}_{\hat{\pi}}(s'' \vert s') r_H(s'')\\
        =& \sum_{\ell = 1}^{K} \hat{g}_{\ell, \hat{\pi}},
    \end{align*}
    where (a) holds due to the well-known result that the Cesaro limit equals to the ordinary limit whenever the latter one exists (see Theorem 43 on page 100 of \cite{hardy2024divergent}) and the fact that $(\hat{P}_{\hat{\pi}}^K P^{\text{sys}})^t$ converges to the stochastic matrix $P^*_{\hat{\pi}}$ as $t$ approaches infinity.

    Next, we argue that for each atomic step $\ell \in [K-1]$, the following series is well defined:
    \begin{align} \label{eq:h-def-standard-limit-single}
        \sum_{t = 1}^{\infty} \mathbb{E}_{\hat{\pi}}\left[\left( \hat{r}(\hat{a}_{\ell}^t) - \hat{g}_{\ell, \hat{\pi}} \right) \Bigg\lvert s^1_k = s \right],\quad \forall s \in \mathcal{S}.
    \end{align}
    We note that the series above can be equivalently written as: For all $s \in \mathcal{S}$,
    \begin{align*}
        & \sum_{t = 1}^{\infty} \mathbb{E}_{\hat{\pi}}\left[\left( \hat{r}(\hat{a}_{\ell}^t) - \hat{g}_{\ell, \hat{\pi}} \right) \Bigg\lvert s^1_k = s \right]\\
        =& \sum_{t = 1}^{\infty} \left\{ \sum_{s' \in \mathcal{S}} \left[(\hat{P}_{\hat{\pi}}^{K-k} P^{\text{sys}}) (\hat{P}_{\hat{\pi}}^K P^{\text{sys}})^{t-1} (\hat{P}_{\hat{\pi}})^{\ell-1} \right](s' \vert s) \cdot \right.\\
        &\quad \left.\sum_{\hat{a} \in \hat{\mathcal{A}}} \hat{\pi}_{\ell}(\hat{a} \vert s') \hat{r}(\hat{a}) - \hat{g}_{\ell, \hat{\pi}} \right\}\\
        =& \sum_{t = 1}^{\infty} \left\{ \sum_{s' \in \mathcal{S}} \left[(\hat{P}_{\hat{\pi}}^{K-k} P^{\text{sys}}) (\hat{P}_{\hat{\pi}}^K P^{\text{sys}})^{t-1} (\hat{P}_{\hat{\pi}})^{\ell-1} \right](s' \vert s) \right.\\
        &\quad \left. - \hat{\rho}_{\ell, \hat{\pi}}(s') \right\} \cdot \sum_{\hat{a} \in \hat{\mathcal{A}}} \hat{\pi}_{\ell}(\hat{a} \vert s') \hat{r}(\hat{a}).
    \end{align*}

    By Perron-Frobenius theorem (see Theorem 4.3.8 on page 160 of \cite{bremaud2013markov}), the term $(\hat{P}_{\hat{\pi}}^K P^{\text{sys}})^{t-1}$ converges to $P^*_{\hat{\pi}} = e \hat{\rho}^T_{1,\hat{\pi}}$ at a rate of $O(\gamma^t)$ for some constant $\gamma \in (0, 1)$. Hence, at the rate of $O(\gamma^t)$, the term \[
        (\hat{P}_{\hat{\pi}}^{K-k} P^{\text{sys}}) (\hat{P}_{\hat{\pi}}^K P^{\text{sys}})^{t-1} (\hat{P}_{\hat{\pi}})^{\ell-1}
    \] converges to 
    \begin{align*}
        &(\hat{P}_{\hat{\pi}}^{K-k} P^{\text{sys}}) P^*_{\hat{\pi}} (\hat{P}_{\hat{\pi}})^{\ell-1}\\ 
        =& (\hat{P}_{\hat{\pi}}^{K-k} P^{\text{sys}}) e \hat{\rho}^T_{1,\hat{\pi}} (\hat{P}_{\hat{\pi}})^{\ell-1}\\
        =& e\hat{\rho}^T_{1,\hat{\pi}} (\hat{P}_{\hat{\pi}})^{\ell-1} = e\hat{\rho}^T_{\ell, \hat{\pi}},
    \end{align*}
    which implies that the series
    \begin{align*}
        &\sum_{t = 1}^{\infty} \left\{ \sum_{s' \in \mathcal{S}} \left[(\hat{P}_{\hat{\pi}}^{K-k} P^{\text{sys}}) (\hat{P}_{\hat{\pi}}^K P^{\text{sys}})^{t-1} (\hat{P}_{\hat{\pi}})^{\ell-1} \right](s' \vert s) \right.\\
        &\quad \left.- \hat{\rho}_{\ell, \hat{\pi}}(s') \right\} \cdot \sum_{\hat{a} \in \hat{\mathcal{A}}} \hat{\pi}_{\ell}(\hat{a} \vert s') \hat{r}(\hat{a})
    \end{align*}
    converges and therefore \eqref{eq:h-def-standard-limit-single} is well-defined.

    Using a similar argument, we can obtain that the following series is also well defined:
    \begin{align} \label{eq:h-def-standard-limit-single-last}
        \sum_{t = 1}^{\infty} \mathbb{E}_{\hat{\pi}}\left[\left(\hat{r}(\hat{a}_K^t) + r_H(s_{K+1}^t) - \hat{g}_{K, \hat{\pi}} \right) \Bigg\lvert s^1_k = s \right],\quad \forall s \in \mathcal{S}.
    \end{align}
        
    Finally, we can reconstruct \eqref{eq:h-def-standard-limit-main} by using \eqref{eq:h-def-standard-limit-single} for all atomic steps:
    \begin{align*}
        & \mathbb{E}_{\hat{\pi}}\left[\sum_{\ell = k}^K \left( \hat{r}(\hat{a}_{\ell}^1) - \frac{1}{K} \hat{g}_{\hat{\pi}} \right) + r_H(s_{K+1}^1) \Bigg\lvert s^1_k = s \right] \\
        &\quad+ \sum_{\ell = 1}^{K-1} \sum_{t = 2}^{\infty} \mathbb{E}_{\hat{\pi}}\left[\left( \hat{r}(\hat{a}_{\ell}^t) - \hat{g}_{\ell, \hat{\pi}} \right) \Bigg\lvert s^1_k = s \right] \\
        &\quad+ \sum_{t = 2}^{\infty} \mathbb{E}_{\hat{\pi}}\left[\left( \hat{r}(\hat{a}_{K}^t) + r_H(s_{K+1}^t) - \hat{g}_{K,\hat{\pi}} \right) \Bigg\lvert s^1_k = s \right],\\
        &\quad \forall s \in \mathcal{S}\\
        =& \mathbb{E}_{\hat{\pi}}\left[\sum_{\ell = k }^K \left( \hat{r}(\hat{a}_{\ell}^1) - \frac{1}{K} \hat{g}_{\hat{\pi}} \right) + r_H(s_{K+1}^1) \Bigg\lvert s^1_k = s \right] \\
        &\quad+ \sum_{t = 2}^{\infty} \mathbb{E}_{\hat{\pi}}\left[\sum_{\ell = 1}^K \left( \hat{r}(\hat{a}_{\ell}^t) - \frac{1}{K} \hat{g}_{\hat{\pi}} \right) + r_H(s_{K+1}^t) \Bigg\lvert s^1_k = s \right],\\
        &\quad \forall s \in \mathcal{S} \tag{a}\\
        =& \hat{h}_{k, \hat{\pi}}(s), \quad \forall s \in \mathcal{S},
    \end{align*}
    where (a) is obtained by interchanging the sum $\sum_{\ell = 1}^K$ and the infinite sum $\sum_{t = 2}^{\infty}$, as we have already shown that the series \eqref{eq:h-def-standard-limit-single}-\eqref{eq:h-def-standard-limit-single-last} are well-defined. Hence, we can conclude that $h_{k, \hat{\pi}}(s)$ given in \eqref{eq:h-def-standard-limit-main} is well defined for all $s \in \mathcal{S}$.
    \hfill $\square$
\end{proof}

\begin{proof}{Proof of Proposition \ref{prop:h-def-equivalence}}
    For any $s \in \mathcal{S}$, we have
    \begin{align*}
        &\hat{h}_{k, \hat{\pi}}(s) =\\
        &\mathbb{E}_{\hat{\pi}}\left[ \sum_{\ell = k }^K \left( \hat{r}(\hat{a}_{\ell}^1) - \frac{1}{K} \hat{g}_{\hat{\pi}} \right) + r_H(s_{K+1}^1) \Bigg\lvert s^1_k = s \right] \\
        &+ \sum_{t = 2}^{\infty} \sum_{\ell = 1}^K \mathbb{E}_{\hat{\pi}}\left[\left( \hat{r}(\hat{a}_{\ell}^t) - \frac{1}{K} \hat{g}_{\hat{\pi}} \right) + r_H(s_{K+1}^t) \Bigg\lvert s^1_k = s \right]\\
        =& \mathbb{E}_{\hat{\pi}}\left[ \sum_{\ell = k }^K \left( \hat{r}(\hat{a}_{\ell}^1) - \frac{1}{K} \hat{g}_{\hat{\pi}} \right) + r_H(s_{K+1}^1) \Bigg\lvert s^1_k = s \right]\\ 
        &+ \sum_{\ell = 1}^{K-1} \sum_{t = 2}^{\infty} \mathbb{E}_{\hat{\pi}}\left[\left( \hat{r}(\hat{a}_{\ell}^t) - \hat{g}_{\ell, \hat{\pi}} \right) \Bigg\lvert s^1_k = s \right] \\
        &+ \sum_{t = 2}^{\infty} \mathbb{E}_{\hat{\pi}}\left[\left(\hat{r}(\hat{a}_K^t) + r_H(s_{K+1}^t) - \hat{g}_{K, \hat{\pi}} \right) \Bigg\lvert s^1_k = s \right].
    \end{align*}
    Due to the well-known result that the Cesaro limit equals to the ordinary limit whenever the latter one exists (see Theorem 43 on page 100 of \cite{hardy2024divergent}) and the fact that we have already shown that \eqref{eq:h-def-standard-limit-main} is well-defined, the above equation is equivalent to
    \begin{align*}
        &\mathbb{E}_{\hat{\pi}}\left[ \sum_{\ell = k }^K \left( \hat{r}(\hat{a}_{\ell}^1) - \frac{1}{K} \hat{g}_{\hat{\pi}} \right) + r_H(s_{K+1}^1) \Bigg\lvert s^1_k = s \right] \\
        &+ \sum_{q = 1}^{K-1} \lim_{D \rightarrow \infty} \frac{1}{D} \sum_{T = 2}^D  \mathbb{E}_{\hat{\pi}}\left[\sum_{t = 2}^{T} \left( \hat{r}(\hat{a}_{q}^t) - \hat{g}_{q, \hat{\pi}} \right) \Bigg\lvert s^1_k = s \right] \\
        &+ \lim_{D \rightarrow \infty} \frac{1}{D} \sum_{T = 2}^D  \mathbb{E}_{\hat{\pi}}\left[\sum_{t = 2}^{T} \left(\hat{r}(\hat{a}_K^t) + r_H(s_{K+1}^t) - \hat{g}_{K, \hat{\pi}} \right) \Bigg\lvert s^1_k = s \right],
    \end{align*}
    which equals to
    \begin{align*}
        &\mathbb{E}_{\hat{\pi}}\left[ \sum_{\ell = k }^K \left( \hat{r}(\hat{a}_{\ell}^1) - \frac{1}{K} \hat{g}_{\hat{\pi}} \right) + r_H(s_{K+1}^1) \Bigg\lvert s^1_k = s \right] \\
        &+ \lim_{D \rightarrow \infty} \frac{1}{DK} \sum_{T = 2}^D\\
        &\quad\left\{\sum_{q = 1}^{K-1}   \mathbb{E}_{\hat{\pi}}\left[\sum_{t = 2}^{T} K \left( \hat{r}(\hat{a}_{q}^t) - \hat{g}_{q, \hat{\pi}} \right) \Bigg\lvert s^1_k = s \right] \right.\\ 
        &+ \left. \mathbb{E}_{\hat{\pi}}\left[\sum_{t = 2}^{T} K \left(\hat{r}(\hat{a}_K^t) + r_H(s_{K+1}^t) - \hat{g}_{K, \hat{\pi}} \right) \Bigg\lvert s^1_k = s \right] \right\}.
    \end{align*}
    Using the relationship in \eqref{eq:g-hat-relations}, we obtain
    \begin{align*}
        &\mathbb{E}_{\hat{\pi}}\left[ \sum_{\ell = k }^K \left( \hat{r}(\hat{a}_{\ell}^1) - \frac{1}{K} \hat{g}_{\hat{\pi}} \right) + r_H(s_{K+1}^1) \Bigg\lvert s^1_k = s \right] \\
        &+ \lim_{D \rightarrow \infty} \frac{1}{DK} \sum_{T = 2}^D \\
        &\quad \left\{\sum_{q = 1}^{K-1}   \mathbb{E}_{\hat{\pi}}\left[\sum_{t = 2}^{T} K \left( \hat{r}(\hat{a}_{q}^t) - \frac{1}{K}\hat{g}_{\hat{\pi}} \right) \Bigg\lvert s^1_k = s \right] \right.\\ 
        &+ \left. \mathbb{E}_{\hat{\pi}}\left[\sum_{t = 2}^{T} K \left(\hat{r}(\hat{a}_K^t) + r_H(s_{K+1}^t) - \frac{1}{K}\hat{g}_{\hat{\pi}} \right) \Bigg\lvert s^1_k = s \right] \right\}.
    \end{align*}
    This can be equivalently written as:
    \begin{align*}
        & \mathbb{E}_{\hat{\pi}}\left[ \sum_{\ell = k }^K \left( \hat{r}(\hat{a}_{\ell}^1) - \frac{1}{K} \hat{g}_{\hat{\pi}} \right) + r_H(s_{K+1}^1) \Bigg\lvert s^1_k = s \right] \\
        &+ \lim_{D \rightarrow \infty} \frac{1}{DK} \sum_{T = 2}^D\\
        &\quad \left\{ \sum_{q = 1}^{K} \mathbb{E}_{\hat{\pi}}\left[\sum_{t = 2}^{T-1} \left(\sum_{\ell = 1}^{K-1} \left( \hat{r}(\hat{a}_{\ell}^t) - \frac{1}{K}\hat{g}_{\hat{\pi}} \right) + r_H(s_{K+1}^t) \right) \right.\right. \\
        &\quad \left. + \sum_{\ell = 1}^{q} \left( \hat{r}(\hat{a}_{\ell}^T) - \frac{1}{K}\hat{g}_{\hat{\pi}} \right) \right.\\
        &\quad+ \left. \mathds{1}\{q=K\}\left(\hat{r}(\hat{a}_K^T) + r_H(s_{K+1}^T) - \frac{1}{K}\hat{g}_{\hat{\pi}} \right) \Bigg\lvert s^1_k = s \right]\\
        &\quad \left. + \sum_{q = 1}^{K-1}   \mathbb{E}_{\hat{\pi}}\left[(q-1) \left( \hat{r}(\hat{a}_{q}^T) - \frac{1}{K}\hat{g}_{\hat{\pi}} \right) \Bigg\lvert s^1_k = s \right] \right.\\
        &\quad+ \left. \mathbb{E}_{\hat{\pi}}\left[K\left(\hat{r}(\hat{a}_K^T) + r_H(s_{K+1}^T) - \frac{1}{K}\hat{g}_{\hat{\pi}} \right) \Bigg\lvert s^1_k = s \right] \right\}.
    \end{align*}
    Using the fact that 
    \begin{align*}
        \hat{g}_{\ell,\hat{\pi}} = 
        \begin{cases}
            \lim_{D \rightarrow \infty} \frac{1}{D} \sum_{T = 2}^D \mathbb{E}_{\hat{\pi}}\left[\hat{r}(\hat{a}^T_{\ell}) \vert s^1_k = s \right],\\
            \quad \text{if } \ell \leq K-1,\\
            \lim_{D \rightarrow \infty} \frac{1}{D} \sum_{T = 2}^D \mathbb{E}_{\hat{\pi}}\left[\hat{r}(\hat{a}^T_{K}) + r_H(s_{K+1}^T) \vert s^1_k = s \right],\\
            \quad \text{if } \ell = K,\\
        \end{cases}
    \end{align*}
    we obtain that the equation above is equivalent to:
    \begin{align*}
        & \mathbb{E}_{\hat{\pi}}\left[ \sum_{\ell = k }^K \left( \hat{r}(\hat{a}_{\ell}^1) - \frac{1}{K} \hat{g}_{\hat{\pi}} \right) + r_H(s_{K+1}^1) \Bigg\lvert s^1_k = s \right] \\
        &+ \lim_{D \rightarrow \infty} \frac{1}{DK} \sum_{T = 2}^D\\
        &\quad \left\{ \sum_{q = 1}^{K} \mathbb{E}_{\hat{\pi}}\left[\sum_{t = 2}^{T-1} \left(\sum_{\ell = 1}^K \left( \hat{r}(\hat{a}_{\ell}^t) - \frac{1}{K}\hat{g}_{\hat{\pi}} \right) + r_H(s_{K+1}^t) \right) \right.\right.\\
        &\quad \left. + \sum_{\ell = 1}^{q} \left( \hat{r}(\hat{a}_{\ell}^T) - \frac{1}{K}\hat{g}_{\hat{\pi}} \right) \right.\\
        &\quad+ \left. \left. \mathds{1}\{q=K\}\left(\hat{r}(\hat{a}_K^T) + r_H(s_{K+1}^T) - \frac{1}{K}\hat{g}_{\hat{\pi}} \right) \Bigg\lvert s^1_k = s \right]\right\}\\
        & + \frac{1}{K} \sum_{\ell = 1}^{K} (\ell-1)\left(\hat{g}_{\ell, \hat{\pi}} - \frac{1}{K} \hat{g}_{\hat{\pi}} \right).
    \end{align*}
    Finally, we can conclude the proof by setting the scalar $\kappa_{\hat{\pi}}$ to be: \[
        \kappa_{\hat{\pi}} := -\frac{1}{K} \sum_{\ell = 1}^{K} (\ell-1)\left(\hat{g}_{\ell, \hat{\pi}} - \frac{1}{K} \hat{g}_{\hat{\pi}} \right).
    \]
    \hfill $\square$
\end{proof}

\clearpage




%
%
%


\bibliographystyle{unsrtnat} 
\bibliography{references} 

\end{document}